\newtheorem{theorem}{Theorem}
\newtheorem{lemma}{Lemma}
\newtheorem{definition}{Definition}
\algnewcommand{\LineComment}[1]{\State \(\triangleright\) #1}
\definecolor{cvprblue}{rgb}{0.21,0.49,0.74}
\title{PatchDEMUX: A Certifiably Robust Framework for Multi-label Classifiers \\ Against Adversarial Patches}
\author{Dennis Jacob\\
UC Berkeley\\
Berkeley, CA, USA\\
{\tt\small djacob18@berkeley.edu}
\and
Chong Xiang\\
Princeton University\\
Princeton, NJ, USA\\
{\tt\small cxiang@princeton.edu}
\and
Prateek Mittal\\
Princeton University\\
Princeton, NJ, USA\\
{\tt\small pmittal@princeton.edu}
}
\begin{document}
\maketitle
\begin{abstract}
Deep learning techniques have enabled vast improvements in computer vision technologies. Nevertheless, these models are vulnerable to adversarial patch attacks which  catastrophically impair performance. The physically realizable nature of these attacks calls for certifiable defenses, which feature provable guarantees on robustness. While certifiable defenses have been successfully applied to single-label classification, limited work has been done for multi-label classification. In this work, we present PatchDEMUX, a certifiably robust framework for multi-label classifiers against adversarial patches. Our approach is a generalizable method which can extend any existing certifiable defense for single-label classification; this is done by considering the multi-label classification task as a series of isolated binary classification problems to provably guarantee robustness. Furthermore, in the scenario where an attacker is limited to a single patch we propose an additional certification procedure that can provide tighter robustness bounds. Using the current state-of-the-art (SOTA) single-label certifiable defense PatchCleanser as a backbone, we find that PatchDEMUX can achieve non-trivial robustness on the MS-COCO and PASCAL VOC datasets while maintaining high clean performance\footnote{Our source code is available at \url{https://github.com/inspire-group/PatchDEMUX}}.
\end{abstract}    
\section{Introduction}
\label{sec:intro}

Deep learning-based computer vision systems have helped transform modern society, contributing to the development of technologies such as self-driving cars, facial recognition, and more \cite{li_survey_2022}. Unfortunately, these performance boosts have come at a security cost; attackers can use \emph{adversarial patches} to perturb patch-shaped regions in images and fool deep learning systems \cite{szegedy_intriguing_2014, brown_adversarial_2018}. The patch threat model presents a unique challenge for the security community due to its physically-realizable nature; for instance, even a single well-designed patch that is printed out can induce failure in the wild \cite{brown_adversarial_2018, eykholt_robust_2018, nesti_evaluating_2021}. 

The importance of adversarial patches has made the design of effective defenses a key research goal. Defense strategies typically fall into one of two categories: \emph{empirical defenses} and \emph{certifiable defenses}. The former leverages clever observations and heuristics to prevent attacks, but can be vulnerable to \emph{adaptive attacks} which bypass the defense through fundamental weaknesses in design \cite{carlini_adversarial_2017, hayes_visible_2018, naseer_local_2018}. As a result, certifiable defenses against patch attacks have become popular for computer vision tasks such as single-label classification and object detection \cite{chiang_certified_2020, levine_randomized_2021, metzen_efficient_2021, xiang_patchguard_2021, xiang_patchguard_2021-1, salman_certified_2022, xiang_objectseeker_2022, xiang_patchcleanser_2022, yatsura_certified_2023, xiang_patchcure_2024}; these methods feature provable guarantees on robustness under any arbitrary patch attack. 

Despite these successes, progress on certifiable defenses against patch attacks has been limited for multi-label classification. Multi-label classifiers provide important capabilities for simultaneously tracking multiple objects while maintaining scalability. Many safety-critical systems depend on the visual sensing capabilities of multi-label classifiers, such as traffic pattern recognition in autonomous vehicles \cite{kungActionSlotVisualActionCentric2024}, video surveillance \cite{elhoseinyMultiClassObjectClassification2013}, and product identification for retail checkout \cite{georgeRecognizingProductsPerexemplar2014}. Some of these applications have become mainstream in industry (i.e., Waymo robotaxis, Just Walk Out checkout, etc.).

To address this challenge we propose PatchDEMUX, a certifiably robust framework against patch attacks for the multi-label classification domain. Our design objective is to extend any existing certifiable defense for single-label classification to the multi-label classification domain. To do so, we leverage the key insight that any multi-label classifier can be separated into individual binary classification tasks. This approach allows us to bootstrap notions of certified robustness based on precision and recall; these are lower bounds on performance which are guaranteed \emph{across all attack strategies in the patch threat model}. We also consider the scenario where an attacker is restricted to a single patch and propose a novel certification procedure that achieves stronger robustness bounds by using constraints in vulnerable patch locations. 

We find that PatchDEMUX achieves non-trivial robustness on the MS-COCO and PASCAL VOC datasets while maintaining high performance on clean data. Specifically, when using the current SOTA single-label certifiable defense PatchCleanser as a backbone, PatchDEMUX attains $85.276\%$ average precision on clean MS-COCO images and $44.902\%$ certified robust average precision. On the PASCAL VOC dataset PatchDEMUX achieves $92.593\%$ clean average precision and $56.030\%$ certified robust average precision. For reference, an undefended model achieves $91.146\%$ average precision on clean MSCOCO images and $96.140\%$ average precision on clean PASCAL VOC images. Overall, the key contributions of our work can be summarized as follows:

\begin{itemize}
    \item We address the challenge of patch attacks in the multi-label domain via a general framework that can interface with any existing/future single-label defense. To the best of our knowledge, our approach is the first of its kind.
    \item Our framework provably guarantees lower bounds on performance irrespective of the chosen patch attack (i.e., the patch can contain an optimized attack, random noise, etc.).
    \item We instantiate a version of our defense framework with the current SOTA single-label defense and achieve strong robust performance on popular benchmarks.
\end{itemize}

\noindent
We hope that future work will integrate with the PatchDEMUX framework and further strengthen the robustness of multi-label classifiers to adversarial patches.
\section{Problem Formulation}
\label{section:background}

In this section, we provide a primer on the multi-label classification task along with standard metrics for evaluation. We next outline the adversarial patch threat model and its relevance in the multi-label setting. Finally, we discuss the concept of certifiable defenses and how they have been used so far to protect single-label classifiers against the patch attack.

\subsection{Multi-label classification}
\label{subsection:multlabel}
Multi-label classification is a computer vision task where images $\mathbf{x} \in \mathcal{X} \subseteq \mathbb{R}^{w \times h \times \gamma}$ with width $w$, height $h$, and number of channels $\gamma$ contain multiple objects simultaneously, with each object belonging to one of $c$ classes \cite{zhang_review_2014}. A classifier is then tasked with recovering each of objects present in an image. Note that this contrasts single-label classification, where exactly one object is recovered from an image.

More rigorously, each input datapoint is a pair $(\mathbf{x}, \mathbf{y})$ where $\mathbf{x} \in \mathcal{X}$ corresponds to an image and $\mathbf{y} \in \mathcal{Y}$ is the associated image label. Each label $\mathbf{y} \in \mathcal{Y} \subseteq \{0, 1\}^c$ is a bitstring where $\mathbf{y}[i] = 1$ means class $i$ is present and $\mathbf{y}[i] = 0$ means class $i$ is absent; this implies that the set of labels $\mathcal{Y}$ is $2^c$ in size, i.e., exponential. A \emph{multi-label classifier} $\mathbb{F}: \mathcal{X}\rightarrow \mathcal{Y}$ is then trained with a loss function such that the predicted label $\mathbf{\hat{y}} := \mathbb{F}(\mathbf{x})$ is equivalent to $\mathbf{y}$. One popular loss function used for training is asymmetric loss (ASL) \cite{ben-baruch_asymmetric_2021}.

To evaluate the performance of a multi-label classifier, it is common to compute the number of \emph{true positives} (i.e., classes $i$ where $\mathbf{y}[i] = \mathbf{\hat{y}}[i] = 1$), the number of \emph{false positives} (i.e., classes $i$ where $\mathbf{y}[i] = 0$ and $\mathbf{\hat{y}}[i] = 1$), and the number of \emph{false negatives} (i.e., classes $i$ where $\mathbf{y}[i] = 1$ and $\mathbf{\hat{y}}[i] = 0$). These can be summarized by the \emph{precision} and \emph{recall} metrics \cite{zhang_review_2014}:
\begin{equation}
\label{eq:precisionrecall}
    precision = \frac{TP}{TP + FP} \quad\quad recall = \frac{TP}{TP + FN}
\end{equation}
The values $TP$, $FP$, and $FN$ represent the number of true positives, false positives, and false negatives respectively.

\subsection{The patch threat model}
\label{subsection:patch}
\textbf{Theoretical formulation.} 
In the patch threat model, attackers possess the ability to arbitrarily adjust pixel values within a restricted region located anywhere on a target image $\mathbf{x} \in \mathcal{X}$; the size of this region can be tuned to alter the strength of the attack \cite{brown_adversarial_2018}. As discussed in \cref{sec:intro}, defending against this threat model is critical due to its physically realizable nature \cite{brown_adversarial_2018, eykholt_robust_2018, nesti_evaluating_2021}. In this paper, we primarily focus on defending against a single adversarial patch as it is a popular setting in prior work \cite{chiang_certified_2020, levine_randomized_2021, metzen_efficient_2021, xiang_patchguard_2021, xiang_patchguard_2021-1, xiang_patchcleanser_2022, xiang_patchcure_2024}. However, our baseline certification methods can also handle multiple patches, provided the underlying single-label defense strategy already has this capability \cite{xiang_patchcleanser_2022}.

We can formally specify patch attacks for an image $\mathbf{x} \in \mathcal{X}$ as follows. Define $\mathcal{R} \subseteq \{0, 1\}^{w \times h}$ as the set of binary matrices which represent restricted regions, where elements inside the region are $0$ and those outside the region are $1$ \cite{xiang_patchcleanser_2022}. Then, the associated patch attacks are:
\begin{equation}
\label{eq:linfty}
    S_{\mathbf{x},\mathcal{R}} := \{\mathbf{r} \circ \mathbf{x} + (\mathbf{1} - \mathbf{r}) \circ \mathbf{x'} | \mathbf{x'} \in \mathcal{X}, \mathbf{r} \in \mathcal{R}\}
\end{equation}
The $\circ$ operator refers to element-wise multiplication with broadcasting to ensure shape compatibility. Note that this formulation demonstrates how the patch attack can be considered a special case of the $\ell_0$-norm threat model \cite{levine_randomized_2021}.

\textbf{Adversarial patches in the multi-label setting.} 
Patch attacks in multi-label classification aim to induce class mismatches between a ground-truth label $\mathbf{y} \in \mathcal{Y}$ and prediction $\mathbf{\hat{y}} \in \mathcal{Y}$. Unlike single-label classification, different types of mismatches are possible in this setting; for instance, patches can increase the number of false negatives and/or the number of false positives predicted by the classifier $\mathbb{F}$. In general, adversarial patches are generated by representing the desired objective as an optimization problem and then applying an iterative technique such as projected gradient descent (PGD) over $S_{\mathbf{x},\mathcal{R}}$ \cite{madry_towards_2019}.

\subsection{Certifiable defenses against patch attacks}
\label{subsection:certifieddefense}
At a high-level, certifiable defenses against patch attacks (CDPA) provide provable guarantees on performance for deep learning-based computer vision systems $\mathbb{F}: \mathcal{X} \rightarrow \mathcal{Y}$ against all possible attacks in the patch threat model \cite{chiang_certified_2020, levine_randomized_2021, metzen_efficient_2021, xiang_patchguard_2021, xiang_patchguard_2021-1, salman_certified_2022, xiang_objectseeker_2022, xiang_patchcleanser_2022, yatsura_certified_2023, xiang_patchcure_2024}. This ensures that defense robustness will not be compromised by future adaptive attacks. 

We formulate a CDPA as having an inference procedure and a certification procedure; additional security parameters, denoted by $\sigma$, manage the trade-off between robust performance and inference time \cite{xiang_patchcleanser_2022}. The inference procedure $INFER_{[\mathbb{F}, \sigma]}: \mathcal{X} \rightarrow \mathcal{Y}$ takes an image $\mathbf{x} \in \mathcal{X}$ as input and outputs a prediction $\hat{\mathbf{y}} \in \mathcal{Y}$. The quality of prediction $\hat{\mathbf{y}}$ with respect to the ground-truth label $\mathbf{y}$ can be evaluated using a performance metric (e.g., precision, recall), which we denote by $\rho:\mathcal{Y}\times\mathcal{Y}\rightarrow\mathbb{R}$. In addition to the inference procedure, the certification procedure $CERT_{[\mathbb{F}, \sigma]}: \mathcal{X} \times\mathcal{Y}\times\mathbb{P}(\mathcal{R})\rightarrow \mathbb{R}$ ($\mathbb{P}()$ denotes power set) takes image $\mathbf{x}$, ground-truth label $\mathbf{y}$, and the threat model represented by the set of allowable patch regions $\mathcal{R}$ to determine the worst possible performance of $INFER$ on image $\mathbf{x}$. The certification procedure is only used for evaluation. Formally, for a performance metric $\rho$ and a patch threat model $S_{\mathbf{x},\mathcal{R}}$ we will have
\begin{equation}
\label{eq:robustnessproperty}
        \rho(INFER_{[\mathbb{F}, \sigma]}(\mathbf{x'}), \mathbf{\bm{y}}) \geq \tau,  \forall \mathbf{x}'\in S_{\mathbf{x},\mathcal{R}}
\end{equation}
Here, $\tau := CERT_{[\mathbb{F}, \sigma]}(\mathbf{x}, \mathbf{y}, \mathcal{R})$ is the lower bound of model prediction quality against an adversary who can use any patch region $\mathbf{r}\in\mathcal{R}$ and introduce arbitrary patch content. Datapoints with a non-trivial lower bound are considered \emph{certifiable}.

We can summarize these concepts as follows.

\begin{definition}[CDPA]
\label{def:certdefensepatches}
    A \emph{certifiable defense against patch attacks (CDPA)} for model $\mathbb{F}: \mathcal{X} \rightarrow \mathcal{Y}$ is a tuple of procedures $DEF := (INFER_{[\mathbb{F}, \sigma]}:\mathcal{X} \rightarrow \mathcal{Y}, CERT_{[\mathbb{F}, \sigma]}:\mathcal{X} \times\mathcal{Y}\times\mathbb{P}(\mathcal{R})\rightarrow \mathbb{R})$ where the former is the inference procedure, the latter is the certification procedure, and $\sigma \subseteq \{0, 1\}^*$ are security parameters. Certifiable datapoints satisfy \cref{eq:robustnessproperty} for a performance metric $\rho: \mathcal{Y} \times \mathcal{Y} \rightarrow \mathbb{R}$.
\end{definition}

We note that we have different $\rho$ for different tasks. For instance, CDPAs for single-label classifiers ensure that the output label is preserved for certifiable datapoints\footnote{We use Iverson bracket notation for convenience}.

\begin{definition}[Single-label CDPA]
\label{def:certdefensesslpatches}
    A \emph{single-label CDPA} is a CDPA for single-label classifiers $\mathbb{F}_s : \mathcal{X} \rightarrow \{1, 2, \dots, c\}$. The performance metric is $\rho(y_1, y_2) := [y_1 = y_2]$. The certification procedure $CERT$ evaluates to $1$ for certifiable datapoints and $0$ otherwise.
\end{definition}

For multi-label classification, we consider the interpretation where the performance metric is $\rho(\mathbf{y_1}, \mathbf{y_2}) := \Sigma_{i = 1}^c [\mathbf{y_1}[i] = 1 \cap \mathbf{y_2}[i] = 1]$ and $CERT$ lower bounds the number of true positives. This helps bootstrap robust metrics such as certified precision and recall (see \cref{subsection:patchdemux_certification}).

\subsection{Certifiable defenses for single-label classifiers against patch attacks}
\label{subsection:certdefensesl}
A variety of CDPA have been developed for single-label classifiers \cite{chiang_certified_2020, levine_randomized_2021, metzen_efficient_2021, xiang_patchguard_2021, xiang_patchguard_2021-1, salman_certified_2022, xiang_patchcleanser_2022, xiang_patchcure_2024}. Current techniques roughly fall into one of two categories: \emph{small receptive field} defenses and \emph{masking} defenses. With regards to the former, the general principle involves limiting the set of image features exposed to the undefended model and then robustly accumulating results across several evaluation calls. Some examples of this approach include De-randomized Smoothing \cite{levine_randomized_2021}, BagCert \cite{metzen_efficient_2021}, and PatchGuard \cite{xiang_patchguard_2021-1, xiang_patchguard_2021}. On the other hand, masking defenses curate a set of masks to provably occlude an adversarial patch regardless of location. PatchCleanser, the current SOTA certifiable defense, uses such a method \cite{xiang_patchcleanser_2022}. Our proposed framework PatchDEMUX is theoretically compatible with any of these techniques.

\section{PatchDEMUX Design}
\label{section:methodology}

\begin{figure*}[!ht]
    \centering
    \includegraphics[width=0.85\linewidth]{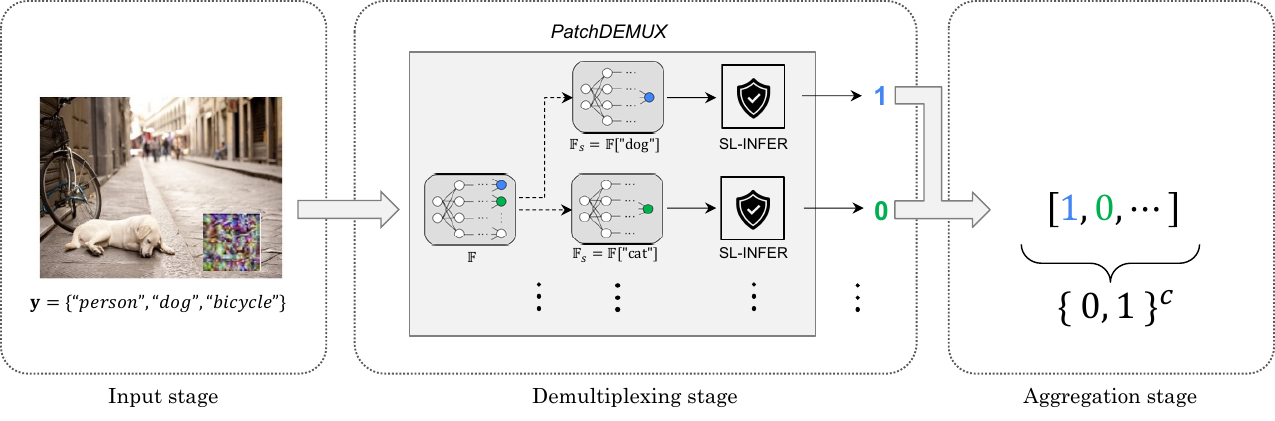}
    \caption{\textit{A diagram which illustrates the defense framework from PatchDEMUX. In the input stage, the (potentially attacked) image is preprocessed. In the demultiplexing stage, the $SL\text{-}INFER$ inference procedure from a single-label CDPA is applied to each individual class in the multi-classification task. This is done by considering the multi-label classifier $\mathbb{F}$ as a series of isolated binary classifiers $\mathbb{F}[i]$ for $i \in \{1, 2, \dots, c\}$. Finally, in the aggregation stage the individual outputs are returned as a single label.}}
    \label{fig:inference_patchdemux}
\end{figure*}

In this section we propose \emph{PatchDEMUX}, a certifiably robust framework for multi-label classifiers against patch attacks. We first outline the key property that any multi-label classification problem can be separated into constituent binary classification tasks. Next, we use this observation to construct a generalizable framework which can theoretically integrate any existing single-label CDPA. We then describe the inference and certification procedures in more detail along with robust evaluation metrics. Finally, we propose a novel location-aware certification method which provides tighter robustness bounds.

\subsection{An overview of the defense framework}
\label{subsection:isolatebinary}
\textbf{Isolating binary classifiers in multi-label classification.} As discussed in \cref{subsection:multlabel}, labels $\mathbf{y} \in \{0, 1\}^c$ in multi-label classification are bitstrings where $\mathbf{y}[i] \in \{0, 1\}$ corresponds to the presence/absence of class $i \in \{1, 2, \dots, c\}$. Note that predictions for each class $\mathbf{y}[i]$ are independent of each other; therefore, the multi-label classification task can be represented as a series of isolated binary classification problems corresponding to each class. This motivates a defense formulation for multi-label classifiers in terms of ``isolated'' binary classifiers, where each class is individually protected by a single-label CDPA. Given a multi-label classifier\footnote{From here on, $\mathcal{Y}$ will denote a multi-label label set with $c$ classes} $\mathbb{F}:\mathcal{X} \rightarrow \mathcal{Y}$, we use the notation $\mathbb{F}[i]: \mathcal{X} \rightarrow \{0, 1\}$ to refer to the isolated classifier for class $i$.

In practice, defining the isolated classifier is complicated as some single-label CDPA designs require architectural restrictions \cite{metzen_efficient_2021, xiang_patchguard_2021-1, xiang_patchguard_2021}. Nevertheless, a workaround is possible; specifically, we can initialize the multi-label classifier $\mathbb{F}: \mathcal{X} \rightarrow \mathcal{Y}$ as an ensemble of $c$ binary classifiers which each satisfy the required architecture. Then, for each class $i \in \{1, 2, \dots, c\}$ we can define the isolated classifier $\mathbb{F}[i]$ as the associated ensemble model. Other defenses are architecture-agnostic \cite{xiang_patchcleanser_2022}. In these cases we can use any off-the-shelf multi-label classifier $\mathbb{F}: \mathcal{X} \rightarrow \mathcal{Y}$ and for each class $i \in \{1, 2, \dots, c\}$ define the isolated classifier $\mathbb{F}[i]$ as having outputs $\mathbb{F}[i](\mathbf{x}) := \mathbb{F}(\mathbf{x})[i]$ for all $\mathbf{x} \in \mathcal{X}$.

\textbf{Our framework.} At a high-level, the PatchDEMUX defense framework takes advantage of the isolation principle to extend any existing single-label CDPA to the multi-label classification task. The \emph{PatchDEMUX inference procedure} consists of three stages (see \cref{fig:inference_patchdemux}). In the input stage, it preprocesses the input image $\mathbf{x} \in \mathcal{X}$. In the demultiplexing stage it isolates binary classifiers for each class $i \in \{1, 2, \dots, c\}$ and applies the underlying single-label CDPA inference procedure. Finally, in the aggregation stage we return the final prediction vector by pooling results from the individual classes. The \emph{PatchDEMUX certification procedure} works similarly. It separately applies the underlying single-label CDPA certification procedure to each isolated classifier and then creates a lower bound for true positives by accumulating the results.

\subsection{PatchDEMUX inference procedure}
\label{subsection:multipatchinference}

\begin{algorithm}
\caption{\textit{The inference procedure associated with PatchDEMUX}}\label{alg:inference_patchdemux}
\begin{algorithmic}[1]

\item[] \textbf{Input:} Image $\mathbf{x} \in \mathcal{X}$, multi-label classifier $\mathbb{F}: \mathcal{X} \rightarrow \mathcal{Y}$, inference procedure $SL\text{-}INFER$ and security parameters $\sigma$ from a single-label CDPA, number of classes $c$
\item[] \textbf{Output:} Prediction $preds \in \{0, 1\}^c$
\Procedure{DemuxInfer}{$\mathbf{x}, \mathbb{F}, SL\text{-}INFER, \sigma, c$}
\State $preds \gets \{0\}^c$ \Comment{Set predictions to zero vector}
\For{$i \gets 1$ to $c$} \Comment{Consider classes individually}
    \State $preds[i] \gets $\Call{SL\text{-}INFER$_{[\mathbb{F}[i], \sigma]}$}{$\mathbf{x}$}
\EndFor
\State \Return{$preds$}
\EndProcedure

\end{algorithmic}
\end{algorithm}

The PatchDEMUX inference procedure is described in \cref{alg:inference_patchdemux}. We first take the inference procedure $SL\text{-}INFER$ from a single-label CDPA and prepare it with security parameters $\sigma$. On line $2$, we initialize a $preds \in \{0, 1\}^c$ array to keep track of individual class predictions. Finally, on line $4$ we run $SL\text{-}INFER$ with the isolated binary classifier $\mathbb{F}[i]$ on input image $\mathbf{x}$ and update $preds$ for class $i$. 

\textbf{Remark.} If the time complexity for $SL\text{-}INFER$ is $\mathcal{O}(f(n))$, the time complexity for \cref{alg:inference_patchdemux} will be $\mathcal{O}(c \cdot f(n))$. However, in practice it is possible to take advantage of relatively negligible defense post-processing and effectively reduce the time complexity to $\mathcal{O}(f(n))$. See \emph{Supplementary Material}, \cref{section:appendix_runtime}.

\subsection{PatchDEMUX certification procedure}
\label{subsection:patchdemux_certification}

\begin{algorithm}
\caption{\textit{The certification procedure associated with PatchDEMUX}}\label{alg:kclasscert}
\begin{algorithmic}[1]
\item[] \textbf{Input:} Image $\mathbf{x} \in \mathcal{X}$, ground-truth $\mathbf{y} \in \mathcal{Y}$, multi-label classifier $\mathbb{F}: \mathcal{X} \rightarrow \mathcal{Y}$, certification procedure $SL\text{-}CERT$ and security parameters $\sigma$ from a single-label CDPA, patch locations $\mathcal{R}$
\item[] \textbf{Output:} Certified number of true positives $TP_{lower}$, false positives upper bound $FP_{upper}$, false negatives upper bound $FN_{upper}$, class certification list $\bm{\kappa}$
\Procedure{DemuxCert}{$\mathbf{x}, \mathbf{y}, \mathbb{F}, SL\text{-}CERT, \sigma, \mathcal{R}$}
  \State $c \gets \text{len}(\mathbf{y})$
  \State $\bm{\kappa} \gets [0]^c$
  \For{$i \gets 1$ to $c$} \Comment{Certify each class separately}
    \State $\bm{\kappa}[i] \gets $\Call{SL\text{-}CERT$_{[\mathbb{F}[i], \sigma]}$}{$\mathbf{x}, \mathbf{y}[i], \mathcal{R}$}
\EndFor
\LineComment{Compute robust metrics}
\State $TP_{lower}, FP_{upper}, FN_{upper} \gets 0, 0, 0$ 
\State $TP_{lower} \gets \Sigma_{i = 1}^c [\bm{\kappa}[i] = 1 \cap \mathbf{y}[i] = 1]$
\State $FP_{upper} \gets \Sigma_{i = 1}^c [\bm{\kappa}[i] = 0 \cap \mathbf{y}[i] = 0]$
\State $FN_{upper} \gets \Sigma_{i = 1}^c [\bm{\kappa}[i] = 0 \cap \mathbf{y}[i] = 1]$
\State \Return $TP_{lower}, FP_{upper}, FN_{upper}, \bm{\kappa}$
\EndProcedure  
\end{algorithmic}
\end{algorithm}
The PatchDEMUX certification procedure is outlined in \cref{alg:kclasscert}. We first initialize the certification procedure $SL\text{-}CERT$ from a single-label CDPA with security parameters $\sigma$. On line $2$, we create the $\bm{\kappa}$ array to store certifiable classes. On line $5$, we run $SL\text{-}CERT$ with the isolated binary classifier $\mathbb{F}[i]$ on datapoint $(\mathbf{x}, \mathbf{y}[i])$ and place the result in $\bm{\kappa}[i]$; recall from \cref{def:certdefensesslpatches} that $SL\text{-}CERT$ returns $1$ for protected datapoints and $0$ otherwise. Finally, on lines $7-10$ we count a successful true positive for classes with $\mathbf{y}[i] = 1$ and $\bm{\kappa}[i] = 1$. Otherwise, we assign a false negative or false positive as we cannot guarantee the accuracy of these classes. We now establish the correctness of these bounds.


\begin{restatable}[\cref{alg:kclasscert} Correctness]{theorem}{patchdemuxcorrectess}
\label{thm:patchdemuxcorrectnessclass}
  Suppose we have an image data point $(\mathbf{x}, \mathbf{y}) \in \mathcal{X} \times \mathcal{Y}$, a single-label CDPA $SL\text{-}DEF$, and a multi-label classification model $\mathbb{F}: \mathcal{X} \rightarrow \mathcal{Y}$. Then, under the patch threat model $S_{\mathbf{x},\mathcal{R}}$ the bounds returned by \cref{alg:kclasscert} are correct.
\end{restatable}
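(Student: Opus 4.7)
The plan is to reduce correctness of the three bounds to the defining robustness property of the underlying single-label CDPA, applied independently to each isolated binary classifier $\mathbb{F}[i]$. Concretely, by Definition~\ref{def:certdefensesslpatches}, whenever $SL\text{-}CERT_{[\mathbb{F}[i],\sigma]}(\mathbf{x},\mathbf{y}[i],\mathcal{R})=1$ we are guaranteed that $SL\text{-}INFER_{[\mathbb{F}[i],\sigma]}(\mathbf{x}') = \mathbf{y}[i]$ for every $\mathbf{x}'\in S_{\mathbf{x},\mathcal{R}}$. Because the PatchDEMUX inference procedure (Algorithm~\ref{alg:inference_patchdemux}) sets $preds[i] = SL\text{-}INFER_{[\mathbb{F}[i],\sigma]}(\mathbf{x}')$, this per-class guarantee lifts directly to a statement about the multi-label prediction on any adversarial image in the threat model. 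The isolation principle from Section~\ref{subsection:isolatebinary} is what licenses treating the $c$ classes independently; no cross-class coupling can break the individual certifications.

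The main step is a three-way case analysis, one for each reported quantity. For $TP_{lower}$, I would show it is a lower bound by restricting attention to the indices counted in line~8: for each such $i$ we have $\bm{\kappa}[i]=1$ and $\mathbf{y}[i]=1$, so the SL-CDPA guarantee forces $preds[i]=1$ under every $\mathbf{x}'\in S_{\mathbf{x},\mathcal{R}}$, meaning each contributes a genuine true positive against every attack; summing gives $TP(\mathbf{x}',\mathbf{y}) \geq TP_{lower}$. For $FP_{upper}$, I would argue by complement: any false positive must come from an index with $\mathbf{y}[i]=0$, and for those indices with $\bm{\kappa}[i]=1$ the SL-CDPA guarantee pins $preds[i]=0$, preventing a false positive. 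Therefore, only indices with $\mathbf{y}[i]=0$ and $\bm{\kappa}[i]=0$ can possibly yield false positives, and this is precisely the quantity computed on line~9. The argument for $FN_{upper}$ is symmetric, swapping the roles of $0$ and $1$ in the ground-truth label.

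The subtle point, and the thing I would be most careful about, is the direction of the bounds: $TP_{lower}$ is a lower bound (we can only guarantee the true positives we have certified) while $FP_{upper}$ and $FN_{upper}$ are upper bounds (we have to pessimistically assume every uncertified index of the appropriate parity could be wrong). I would be explicit that the certification does \emph{not} promise anything about uncertified indices: an uncertified class might happen to be predicted correctly under a given attack, but since we have no witness, we must allocate it to the worst-case side of the ledger. Stated differently, the three quantities together satisfy $TP_{lower} + FP_{upper} + FN_{upper} + (\text{certified true negatives}) = c$, and the inequalities $TP \geq TP_{lower}$, $FP \leq FP_{upper}$, $FN \leq FN_{upper}$ hold simultaneously for every $\mathbf{x}'\in S_{\mathbf{x},\mathcal{R}}$.

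Assembling these three per-metric arguments and quantifying over $\mathbf{x}'\in S_{\mathbf{x},\mathcal{R}}$ completes the proof. No induction or algorithmic invariants are needed beyond a single loop-body observation per class, since the isolated classifiers are certified independently. The only real obstacle is notational: being clean about the distinction between the deterministic quantity $\bm{\kappa}[i]$ (a function of $\mathbf{x}$ and $\mathbf{y}$) and the attack-dependent quantities $TP$, $FP$, $FN$ (functions of $\mathbf{x}'$), and about why the SL-CDPA guarantee collapses this dependence for the certified indices.
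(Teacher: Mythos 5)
Your proposal is correct and follows essentially the same route as the paper's proof: reduce each bound to the per-class guarantee of Definition~\ref{def:certdefensesslpatches} applied to the isolated classifiers $\mathbb{F}[i]$, then partition the classes by the pair $(\bm{\kappa}[i], \mathbf{y}[i])$ to conclude that certified positive classes are guaranteed true positives while uncertified classes must be pessimistically charged as false negatives or false positives. If anything, your complement argument for $FP_{upper}$ (certified classes with $\mathbf{y}[i]=0$ are pinned to predict $0$, so only uncertified ones can contribute) is spelled out more explicitly than the paper, which dispatches that case with ``similar fashion.''
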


\begin{proof}
See \emph{Supplementary Material}, \cref{section:appendix_certproofs}.
\end{proof}

Thus, using \cref{alg:kclasscert} we have established the lower bound on true positives ($TP_{lower}$) and the upper bound on both false positives ($FP_{upper}$) and false negatives ($FN_{upper}$) when using \cref{alg:inference_patchdemux}. This allows us to bootstrap notions of \emph{certified precision} and \emph{certified recall} by referencing \cref{eq:precisionrecall}:

\begin{equation}
\label{eq:certifiedprecision}
    certified \: precision = \frac{TP_{lower}}{TP_{lower} + FP_{upper}}
\end{equation}
\begin{equation}
\label{eq:certifiedrecall}
    certified \: recall = \frac{TP_{lower}}{TP_{lower} + FN_{upper}}
\end{equation}

\noindent
Note by construction that both metrics provide lower bounds for precision and recall on a datapoint $(\mathbf{x}, \mathbf{y})$ \emph{irrespective of any attempted patch attack}; the real-world performance of our defense will always be higher. Therefore, an empirical evaluation of existing multi-label attack vectors is not necessary \cite{aichGAMAGenerativeAdversarial2022, melacciDomainKnowledgeAlleviates2022, aichLeveragingLocalPatch2023, mahmoodSemanticAwareMultiLabelAdversarial2024}. Furthermore, micro-averaging these metrics across datapoints provides lower bounds on precision and recall for an entire dataset \cite{zhang_review_2014}.

\subsection{Location-aware certification}
\label{subsection:newcertmethod}
\begin{figure*}[!ht]
    \centering
    \includegraphics[width=\linewidth]{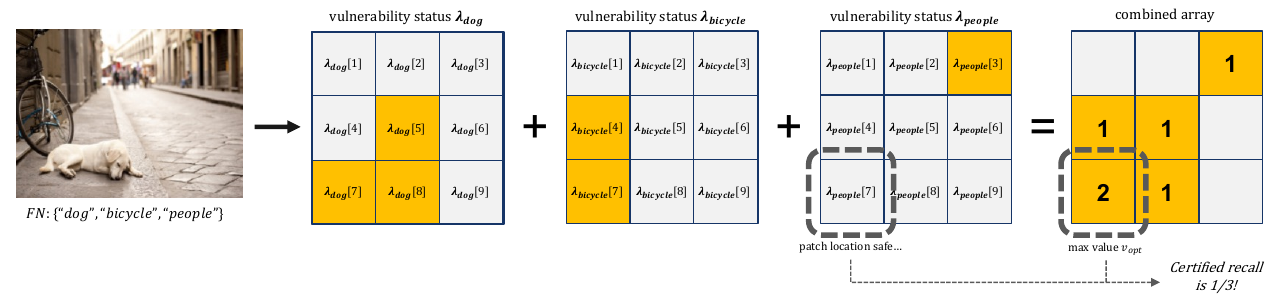}
    \caption{\textit{A diagram which illustrates the key intuition for the location-aware approach. In the sample image we assume all three objects (i.e., ``dog'', ``bicycle'', ``people'') are false negatives. Thus, for each $FN$ we extract the vulnerability status over all patch locations (orange means vulnerable) and accumulate them to find the most vulnerable patch location; this happens to be in the bottom left corner of the image. However, the ``people'' class by itself is not vulnerable to this location; thus, we can claim stronger robustness bounds than initially suggested by \cref{alg:kclasscert}.}}
    \label{fig:locationaware_ex}
\end{figure*}

We now discuss an improved method called \emph{location-aware certification} which extends \cref{alg:kclasscert}. This method works in the scenario where an attacker is restricted to a single patch. The general intuition is that if we track vulnerable patch locations for each class, we can use the constraint that an adversarial patch can only be placed at one location to extract stronger robustness guarantees. For instance, suppose we have an image with a dog, a bicycle, and people (see \cref{fig:locationaware_ex}). If we directly apply \cref{alg:kclasscert}, it is possible that each of these classes would individually fail to be certified. However, this method does not account for the fact that different classes may be vulnerable at different locations; for example, the ``dog'' and ``bicycle'' classes might be at risk in the bottom left corner of the image, while the ``people'' class is at risk near the top. Because the patch cannot exist in two places simultaneously, at least one class must be robust and the actual certified recall will be $1/3$.

\subsubsection{Tracking vulnerable patch locations}
\label{subsubsection:singlelabellocal}
We now give a formal treatment of our core idea. Suppose we have a single-label CDPA $SL\text{-}DEF$. For many existing single-label defenses, it is possible to relate the certification procedure $SL\text{-}CERT$ to the complete list of patch locations $\mathcal{R}$ from \cref{eq:linfty} \cite{chiang_certified_2020, levine_randomized_2021, metzen_efficient_2021, xiang_patchguard_2021, xiang_patchguard_2021-1, salman_certified_2022, xiang_patchcleanser_2022, xiang_patchcure_2024}. In these cases, we extend \cref{def:certdefensesslpatches} and allow $SL\text{-}CERT$ to return a \emph{vulnerability status array}, which we denote by $\bm{\lambda} \in \{0, 1\}^{|\mathcal{R}|}$. A value of $1$ implies the image $\mathbf{x} \in \mathcal{X}$ is protected from attacks located in $\mathbf{r} \in \mathcal{R}$, while $0$ means it is not. 

This provides a convenient formulation with which to express our improved method. Consider a multi-label classifier $\mathbb{F}: \mathcal{X} \rightarrow \mathcal{Y}$. We first obtain vulnerability status arrays $\bm{\lambda}$ for each class in \cref{alg:kclasscert} that could not be certified; this is done by isolating the associated binary classifiers. We then note that given $k$ classes of a common failure mode (i.e., $FN$ or $FP$), the sum of the inverted arrays $\mathbf{1} - \bm{\lambda}$ will represent the frequency of the failure type at each patch location. The key insight is that the maximum value, $v_{opt}$, from the combined array will represent the patch location $\mathbf{r_{opt}} \in \mathcal{R}$ of the image most vulnerable to a patch attack; an attacker must place an adversarial patch at this location to maximize malicious effects. Note however that it is possible $v_{opt} < k$. Then, as per the construction of each $\bm{\lambda}$ these $k - v_{opt} > 0$ classes will be \emph{guaranteed robustness under the optimal patch location.}

\subsubsection{Proposing our novel algorithm}
\label{subsubsection:newcertproposal}

\begin{algorithm}
\caption{\textit{Location-aware certification for $FN$}}\label{alg:newcert}
\begin{algorithmic}[1]
\item[] \textbf{Input:} Image $\mathbf{x} \in \mathcal{X}$, ground-truth $\mathbf{y} \in \mathcal{Y}$, multi-label classifier $\mathbb{F}: \mathcal{X} \rightarrow \mathcal{Y}$, certification procedure $SL\text{-}CERT$ and security parameters $\sigma$ from a single-label CDPA, patch locations $\mathcal{R}$
\item[] \textbf{Output:} Certified number of true positives $TP_{new}$, false negatives upper bound $FN_{new}$
\Procedure{LocCert}{$\mathbf{x}, \mathbf{y}, \mathbb{F}, SL\text{-}CERT, \sigma, \mathcal{R}$}
\LineComment{Pass all args to \Call{DemuxCert}{$\dots$}}
\State $TP, FP, FN, \bm{\kappa} \gets $ \Call{DemuxCert}{$\dots$}
\LineComment{Initialize array with list of $FN$ indices}
\State $c \gets \text{len}(\mathbf{y})$
\State $fnIdx \gets \text{list}(\{1 \leq i \leq c : \bm{\kappa}[i] = 0 \cap \mathbf{y}[i] = 1\})$ 
\State $fnCertFails \gets [0]^{FN \times |\mathcal{R}|}$
\For{$k \gets 1$ to $FN$} \Comment{Isolate each $FN$ classifier}
    \State $\mathbb{F}_s \gets \mathbb{F}[fnIdx[k]]$ 
    \State $\bm{\lambda} \gets$ \Call{SL\text{-}CERT$_{[\mathbb{F}_s, \sigma]}$}{$\mathbf{x}, \mathbf{y}[fnIdx[k]], \mathcal{R}$}
    \State $fnCertFails[k] = \bm{1} - \bm{\lambda}$
\EndFor

\State $fnTotal \gets \text{sum}(fnCertFails, \text{dim} = 0)$
\State $FN_{new} = \text{max}(fnTotal)$ \Comment{Pick worst location}
\State $TP_{new} = TP + (FN - FN_{new})$
\State \Return $TP_{new}, FN_{new}$
\EndProcedure  
\end{algorithmic}
\end{algorithm}

These insights are encapsulated by \cref{alg:newcert}, the location-aware certification method for false negatives.\footnote{Obtaining $FP_{new}$ is similar, with line $5$ changed to track $FP$ indices} It works by first computing robustness bounds for data point $(\mathbf{x}, \mathbf{y})$ via \cref{alg:kclasscert}. On line $5$ we determine the false negative classes that failed certification in \cref{alg:kclasscert}. During the $for$ loop on lines $8-12$, we extract the vulnerability status array $\bm{\lambda}$ for each false negative by isolating the associated binary classification task. Finally, we sum the inverted arrays $\mathbf{1} - \bm{\lambda}$ on line $13$ and pick the patch location with the largest value; this is the max number of false negatives an attacker can induce at test time. We then alter the lower bound for true positives on line $16$.

We now demonstrate that \cref{alg:newcert} provides superior bounds to \cref{alg:kclasscert}.

\begin{restatable}[\cref{alg:newcert} Correctness]{theorem}{histcorrect}
\label{thm:histcorrect}
    Suppose we have an image data point $(\mathbf{x}, \mathbf{y}) \in \mathcal{X} \times \mathcal{Y}$, a single-label CDPA $SL\text{-}DEF$, and a multi-label classification model $\mathbb{F}: \mathcal{X} \rightarrow \mathcal{Y}$. If $SL\text{-}CERT$ returns the vulnerability status array $\bm{\lambda}$ associated with each $\mathbf{r} \in \mathcal{R}$, then under the patch threat model $S_{\mathbf{x},\mathcal{R}}$ the bounds from \cref{alg:newcert} are correct and stronger than \cref{alg:kclasscert}.
\end{restatable}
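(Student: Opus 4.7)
The plan is to split the proof into two claims: that $TP_{new}$ and $FN_{new}$ are valid worst-case bounds under the single-patch threat model, and that they dominate the bounds returned by \cref{alg:kclasscert}. Both rest on the extended semantics of $SL\text{-}CERT$ assumed in the hypothesis: for each isolated binary classifier $\mathbb{F}[i]$, an entry $\bm{\lambda}_i[\mathbf{r}] = 1$ constitutes a per-location certification, i.e.\ it guarantees that $\mathbb{F}[i]$ predicts $\mathbf{y}[i]$ correctly on every $\mathbf{x}' \in S_{\mathbf{x},\{\mathbf{r}\}}$, while $\bm{\lambda}_i[\mathbf{r}] = 0$ offers no such guarantee.

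For correctness, I would fix an arbitrary patch location $\mathbf{r}^\star \in \mathcal{R}$ that the adversary may choose. The classes counted in $TP$ were certified by \cref{alg:kclasscert} with $\bm{\kappa}[i] = 1$, so by \cref{thm:patchdemuxcorrectnessclass} they contribute $TP$ guaranteed true positives regardless of $\mathbf{r}^\star$. Among the $FN$ classes in $fnIdx$, precisely those with $\bm{\lambda}_{fnIdx[k]}[\mathbf{r}^\star] = 1$ also remain true positives under this location; their number equals $FN - fnTotal[\mathbf{r}^\star]$ by the definitions of $fnCertFails$ and $fnTotal$. Hence the inference output contains at least $TP + FN - fnTotal[\mathbf{r}^\star]$ true positives and at most $fnTotal[\mathbf{r}^\star]$ false negatives among the originally uncertified classes. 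Because the adversary can select any $\mathbf{r}^\star$, taking a minimum over true positives (equivalently, a maximum over $fnTotal$) yields the claimed $FN_{new} = \max_{\mathbf{r}} fnTotal[\mathbf{r}]$ and $TP_{new} = TP + (FN - FN_{new})$.

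For strict improvement, I would simply observe that $fnCertFails$ consists of $FN$ binary vectors, so each coordinate of $fnTotal$ lies in $\{0, 1, \dots, FN\}$. This forces $FN_{new} \leq FN$, and hence $TP_{new} \geq TP$, with equality iff some single patch location is simultaneously vulnerable to every class in $fnIdx$.

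The main obstacle is the preliminary bookkeeping around the extended definition of $SL\text{-}CERT$: translating a vulnerability status array into a family of location-restricted single-label CDPAs and checking that these per-location guarantees compose correctly with the adversary's selection of a single $\mathbf{r}^\star$. A secondary subtlety worth flagging is that the argument fundamentally relies on the single-patch assumption; if the adversary could place patches at two distinct locations simultaneously, the tight worst-case bound would involve the worst \emph{pair} of coordinates of $fnTotal$, which can be strictly larger than $\max_{\mathbf{r}} fnTotal[\mathbf{r}]$, so the improvement claim would no longer hold as stated.
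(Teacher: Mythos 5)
Your proposal is correct and follows essentially the same route as the paper: the tightness claim matches the paper's Lemma~\ref{lemma:histtight} (each coordinate of $fnTotal$ is a sum of $FN$ binary indicators, hence at most $FN$), and the correctness claim matches the paper's use of \cref{def:vulstatusarray} to convert $\bm{\lambda}[\mathbf{r}]=1$ entries into per-location inference guarantees for the isolated classifiers. If anything, your quantification over an arbitrary adversarial location $\mathbf{r}^\star$ followed by a maximum over $fnTotal$ is slightly cleaner than the paper's phrasing, which reasons only about the single worst location $\mathbf{r_{opt}}$.
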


\begin{proof}
See \emph{Supplementary Material}, \cref{section:appendix_certproofs}.
\end{proof}

An analogue to \cref{thm:histcorrect} also exists for $FP$ bounds, and can be proved using a modified version of \cref{alg:newcert} that tracks $FP$ indices.
\section{Main Results}
\label{section:mainresults}
\subsection{Setup}
\label{subsection:methodology}

In this section, we discuss our evaluation setup. The associated source code is available at \url{https://github.com/inspire-group/PatchDEMUX}.

\textbf{Backbone initialization and parameters.} Recall from \cref{subsection:isolatebinary} that PatchDEMUX requires an underlying single-label CDPA to operate. For our experiments we choose PatchCleanser, as it is the current SOTA single-label CDPA and is architecture-agnostic (i.e., it is compatible with any off-the-shelf multi-label classifier) \cite{xiang_patchcleanser_2022}. PatchCleanser works by using a novel double-masking algorithm along with a specially generated certification mask set to provably remove adversarial patches \cite{xiang_patchcleanser_2022}. The mask generation process has two security parameters. The first is the number of masks for each image dimension $k_1 \times k_2$; using more masks leads to longer inference time but results in stronger robustness, effectively serving as a ``computational budget'' \cite{xiang_patchcleanser_2022}. The second is the estimated size of the patch $p$ in pixels. Our experiments with PatchDEMUX use $6 \times 6$ masks and assume the patch is ${\sim}2\%$ of the overall image size, which are the default settings in \citet{xiang_patchcleanser_2022}; we vary these parameters in \cref{section:securityparam}. For more details on how PatchCleanser fits into the PatchDEMUX framework see \emph{Supplementary Material}, \cref{section:appendix_patchcleanser}. 

We note that PatchCleanser can also provide protection against multiple patches \cite{xiang_patchcleanser_2022}. Because our baseline certification method provably extends single-label guarantees to multi-label setting, it will also feature resistance against multiple patches. In our experiments, we focus on the single patch setting for simplicity.

\textbf{Dataset and model architectures.} We evaluate our defense on two datasets: MS-COCO \cite{lin_microsoft_2015} and PASCAL VOC \cite{everinghamPascalVisualObject2010}. The former is a challenging collection of images that feature ``common objects in context'' \cite{lin_microsoft_2015}, while the latter focuses on ``realistic scenes'' \cite{everinghamPascalVisualObject2010}. For our experiments we test on the MS-COCO 2014 validation split, which contains ${\sim}41,000$ images and $80$ classes, and the PASCAL VOC 2007 test split, which has ${\sim}5,000$ images and $20$ classes. Both of these splits are commonly used in the multi-label classification community \cite{ben-baruch_asymmetric_2021, liu_query2label_2021, ridnik_ml-decoder_2023, xu_open_2023}.

For the multi-label classifier architecture, we evaluate two options. The first is a ResNet-based architecture from  \citet{ben-baruch_asymmetric_2021} that uses convolution kernels and has an input size of $448 \times 448$. The second is a vision transformer-based (ViT) architecture from \citet{liu_query2label_2021} that uses the self-attention mechanism and has an input size of $384 \times 384$ \cite{dosovitskiy_image_2021, liu_query2label_2021, xu_open_2023}. These models are chosen as they perform well on the multi-label classification task and have publicly available checkpoints. We resize images to fit on each model and apply different defense fine-tuning methods (i.e., Random Cutout \cite{devries_improved_2017}, Greedy Cutout \cite{saha_revisiting_2023}) to achieve stronger robustness guarantees.

\textbf{Evaluation settings and metrics.} Our results feature several evaluation settings.
\begin{enumerate}
    \item \emph{Undefended clean:} This setting represents evaluation on clean data without the PatchDEMUX defense.
    \item \emph{Defended clean:} This setting refers to evaluation on clean data with the PatchDEMUX defense activated.
    \item \emph{Certified robust:} This setting represents lower bounds on performance determined using \cref{alg:kclasscert}.
    \item \emph{Location-aware robust:} This setting represents the tighter certification bounds from \cref{alg:newcert}. We report performance corresponding to the worst-case attacker (see \emph{Supplementary Material}, \cref{section:appendix_locablation}).
\end{enumerate}
The first two are \emph{clean settings}, where precision and recall metrics are empirically computed for each datapoint. The latter two are \emph{certified robust settings}, where certified precision and certified recall metrics are computed using \cref{alg:kclasscert} and \cref{alg:newcert}. In all four evaluation settings we micro-average metrics over the entire dataset \cite{zhang_review_2014}. In addition, we sweep model outputs across a range of threshold values to create \emph{precision-recall plots}. The associated area-under-curve values aggregate performance and are used to approximate \emph{average precision} (AP); more details are in \emph{Supplementary Material}, \cref{section:appendix_evalmetrics}.

\subsection{PatchDEMUX overall performance}
\label{subsection:bestresults}
\begin{table*}[!ht]
    \centering
    \caption{\textit{PatchDEMUX performance with ViT architecture on the MS-COCO 2014 validation dataset. Precision values are evaluated at key recall levels along with the approximated average precision. We assume the patch attack is at most $2\%$ of the image area and use a computational budget of $6 \times 6$ masks}}
    \subcaptionbox{\textit{Clean setting precision values}\label{tab:vitbesttable_clean}}{
    \begin{tabular}{lcccc}
        \toprule
        \textbf{Architecture} & \multicolumn{4}{c}{ViT} \\
        \cmidrule(l){1-1}\cmidrule(l){2-5}
        \textbf{Clean recall} & $25\%$ & $50\%$ & $75\%$ & $AP$ \\
        \midrule
        \textit{Undefended} & 99.930 & 99.704 & 96.141 & 91.146 \\
        \textit{Defended} & 99.894 & 99.223 & 87.764 & 85.276 \\
        \bottomrule
    \end{tabular}
    }
    \subcaptionbox{\textit{Certified robust setting precision values}\label{tab:vitbesttable_robust}}{
    \begin{tabular}{lcccc}
        \toprule
        \textbf{Architecture} & \multicolumn{4}{c}{ViT} \\
        \cmidrule(l){1-1}\cmidrule(l){2-5}
        \textbf{Certified recall} & $25\%$ & $50\%$ & $75\%$ & $AP$ \\
        \midrule
        \textit{Certified robust} & 95.369 & 50.950 & 22.662 & 41.763 \\
        \textit{Location-aware} & 95.670 & 56.038 & 26.375 & 44.902\\
        \bottomrule
    \end{tabular}
    }
    \label{tab:bestdata}
\end{table*}

\begin{figure*}[!ht]
    \centering
    \subcaptionbox{\textit{Clean setting precision-recall curves}\label{fig:vitbestprecrecall_clean}}{\includegraphics[width=0.4\textwidth]{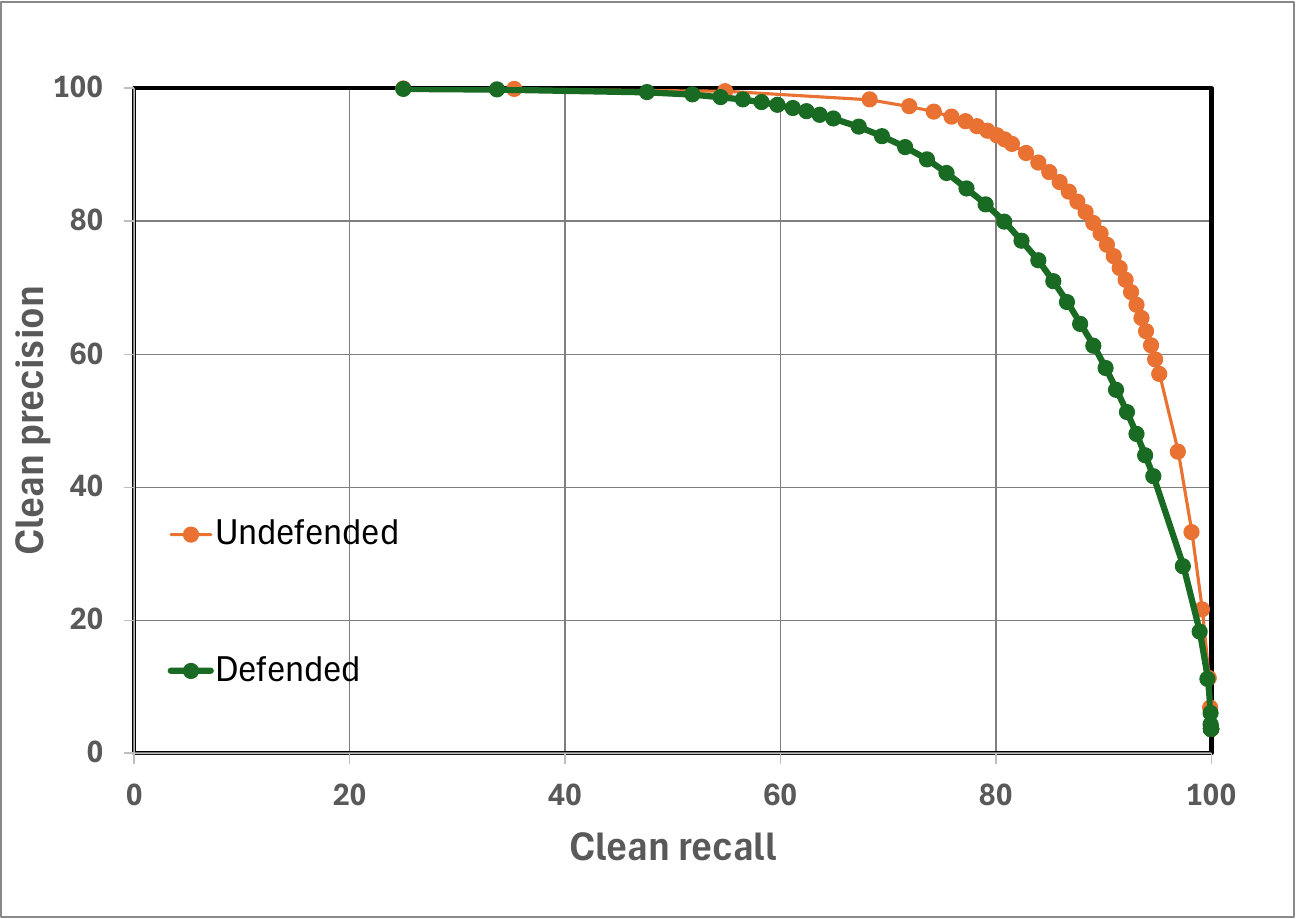}}
    \subcaptionbox{\textit{Certified robust setting precision-recall curves}\label{fig:vitbestprecrecall_robust}}{\includegraphics[width=0.4\textwidth]{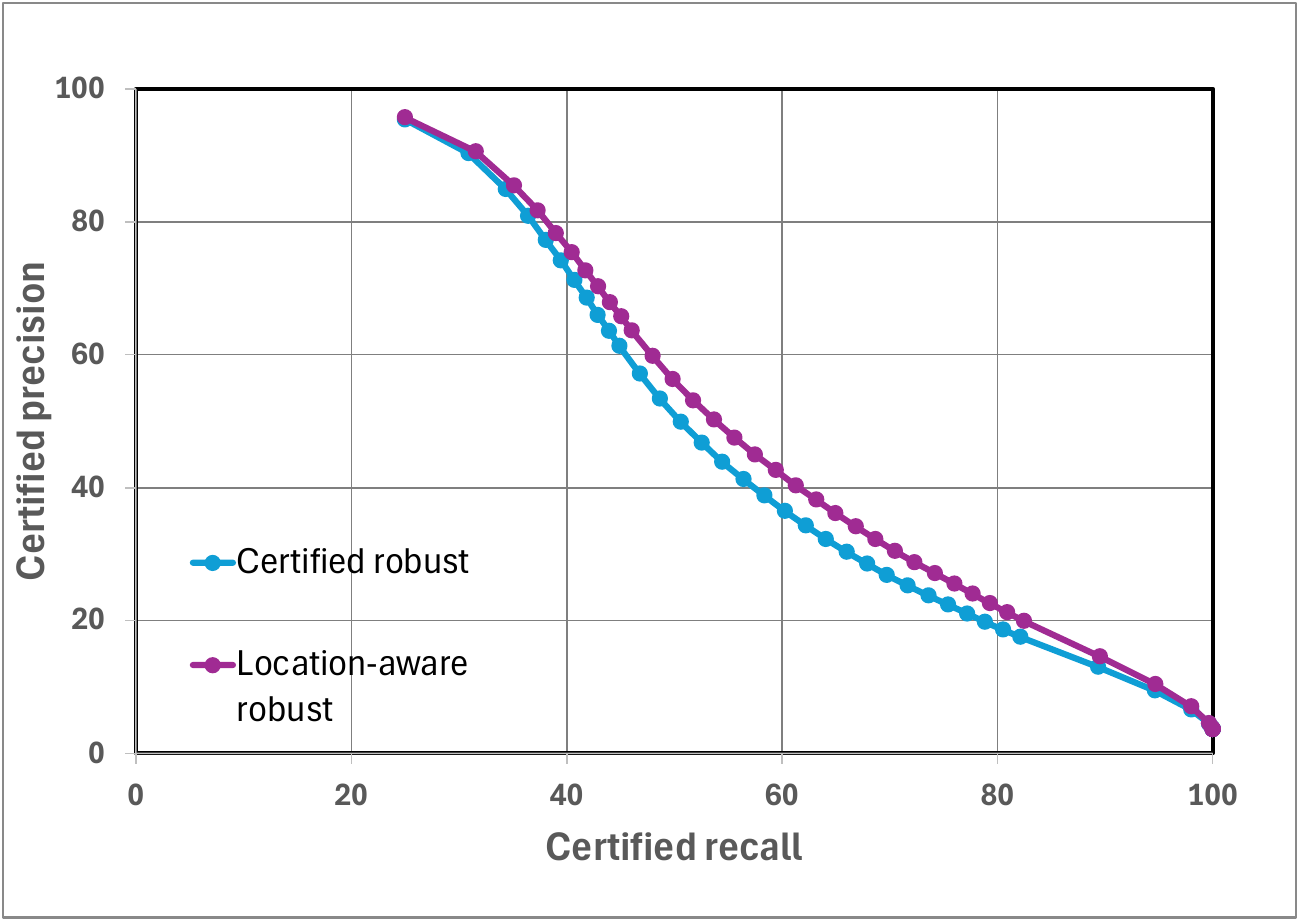}}
    \caption{\textit{PatchDEMUX precision-recall curves with ViT architecture over the MS-COCO 2014 validation dataset. We consider the clean and certified robust evaluation settings. We assume the patch attack is at most $2\%$ of the image area and use a computational budget of $6 \times 6$ masks.}}
    \label{fig:vitbestprecrecall}
\end{figure*}

In this section we report our main results for PatchDEMUX on the MS-COCO 2014 validation dataset. We summarize the precision values associated with key recall levels in \cref{tab:bestdata}. \cref{fig:vitbestprecrecall} features precision-recall plots, while AP values are present in \cref{tab:bestdata}. Because the ViT architecture outperforms the Resnet architecture (see \emph{Supplementary Material}, \cref{section:appendix_resnet}) we focus on the ViT model here. Performance of the ViT architecture on the PASCAL VOC 2007 test dataset is in \emph{Supplementary Material}, \cref{section:appendix_pascalvoc}.

\textbf{High clean performance.}
As shown in \cref{tab:vitbesttable_clean} and \cref{fig:vitbestprecrecall_clean}, the PatchDEMUX inference procedure features excellent performance on clean data. Specifically, the defended clean setting achieves ${\sim}94\%$ of the undefended model's AP. These results demonstrate that PatchDEMUX can be deployed at test time with minimal loss in performance utility. 

\textbf{Non-trivial robustness.} 
\cref{tab:vitbesttable_robust} and \cref{fig:vitbestprecrecall_robust} also show that PatchDEMUX attains non-trivial certifiable robustness on the MS-COCO 2014 validation dataset. For instance, when fixed at $50\%$ certified recall PatchDEMUX achieves $56.038\%$ certified precision. This performance remains stable across a variety of thresholds, as evidenced by the $44.902\%$ certified AP value. Location-aware certification is a key factor in these results, improving certified AP by almost $3$ points compared to the certified robust setting. Improvements are most notable in the \emph{mid recall-mid precision} region of the certified robust precision-recall plot (\cref{fig:vitbestprecrecall_robust}).

Interestingly, the defended clean precision-recall plot (\cref{fig:vitbestprecrecall_clean}) is concave in shape while the certified robust plots (\cref{fig:vitbestprecrecall_robust}) are slightly convex. This performance gap is likely due to the sensitivity of PatchCleanser's certification procedure to object occlusion from the generated mask set. This limitation is compounded by the fact that many MS-COCO images contain objects that are small relative to the overall image size \cite{lin_microsoft_2015, xiang_patchcleanser_2022}.

\subsection{Ablation studies}
\label{subsection:ablations}
We also perform a series of ablation studies for PatchDEMUX using the MS-COCO 2014 validation dataset. We first empirically compare different attackers in the location-aware robust setting and find that attacks targeting false positives are relatively ``weaker'' (see \emph{Supplementary Material}, \cref{section:appendix_locablation}). We then investigate the impact of different defense fine-tuning routines, and find that variants of cutout fine-tuning (i.e., Random Cutout \cite{devries_improved_2017}, Greedy Cutout \cite{saha_revisiting_2023}) can boost model robustness (see \emph{Supplementary Material}, \cref{section:appendix_pretrainablation}); the strongest results for the defended clean setting are featured in the previous section.
\section{Security Parameter Experiments}
\label{section:securityparam}

\begin{figure*}[!ht]
    \centering
    \subcaptionbox{\textit{Clean AP and certified AP as a function of mask number.}\label{fig:secparameteranalysis_masknum}}{\includegraphics[width=0.425\textwidth]{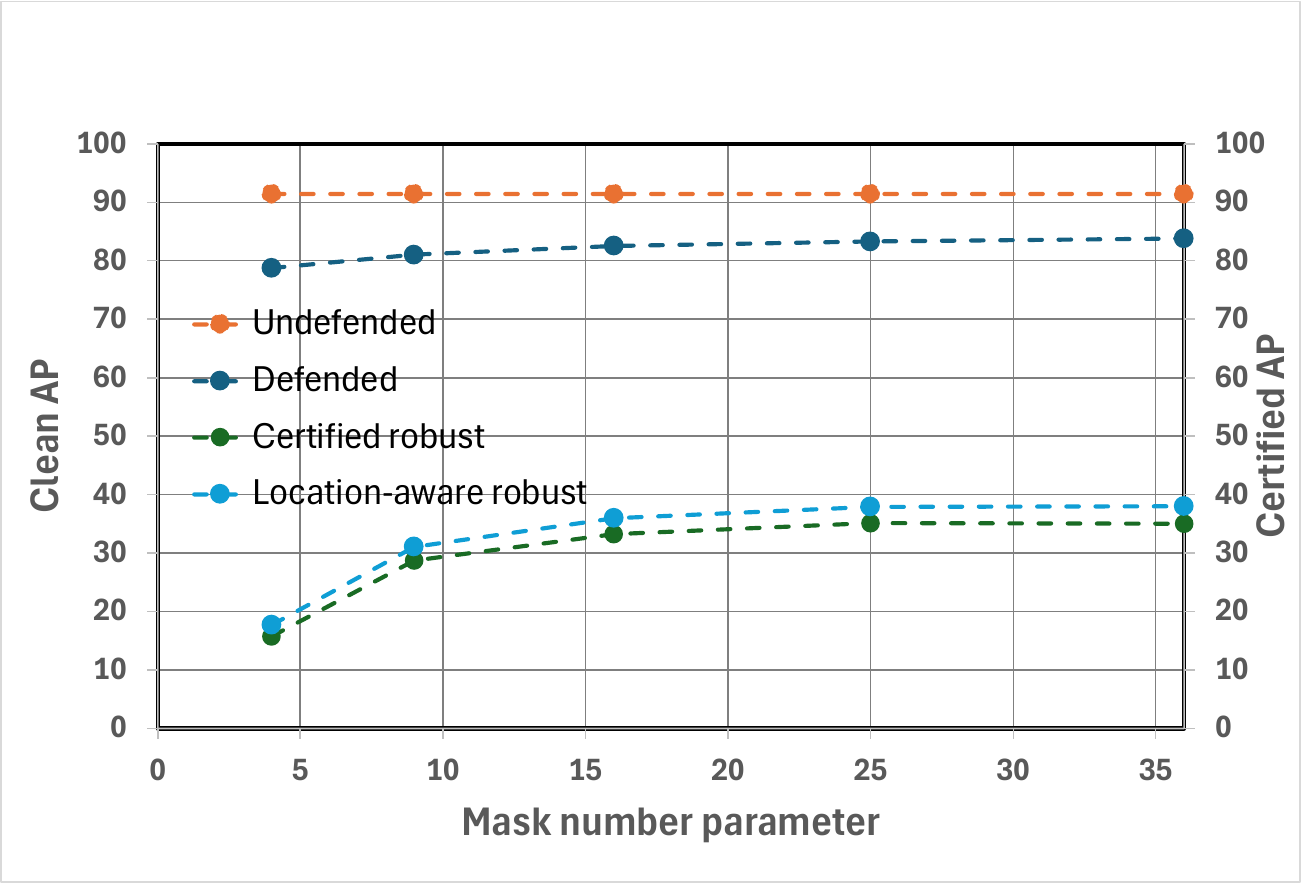}}
    \subcaptionbox{\textit{Clean AP and certified AP as a function of patch size.}\label{fig:secparameteranalysis_patch}}{\includegraphics[width=0.425\textwidth]{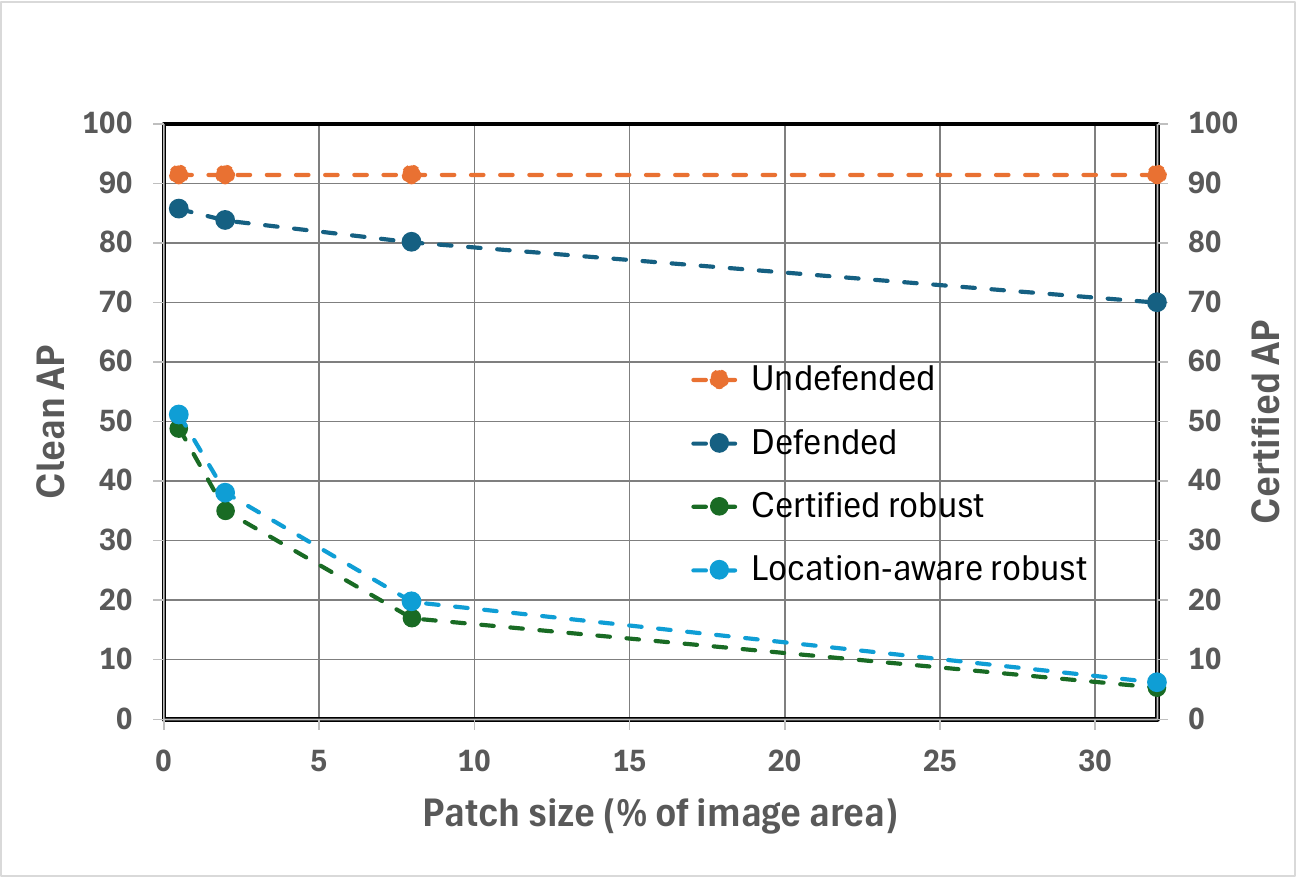}}
    \caption{\textit{The impact of varying PatchCleanser security parameters on PatchDEMUX performance. Experiments performed on MS-COCO 2014 validation dataset. We compute clean AP for the clean setting evaluations, and certified AP for the certified robust setting evaluations.}}
    \label{fig:secparameteranalysis}
\end{figure*}

As discussed in \cref{subsection:methodology}, the PatchCleanser backbone has two security parameters: the number of masks desired in each dimension $k_1 \times k_2$ (i.e., the ``computational budget'') and the estimated size of the patch $p$ in pixels \cite{xiang_patchcleanser_2022}. In this section, we study the impact of these parameters on PatchDEMUX performance. To isolate the effects of security parameter variation, we use ViT checkpoints without defense fine-tuning. Experiments are done on the MS-COCO 2014 validation dataset.

\subsection{Impact of varying mask number}
\label{subsection:masknumexp}
We present results when varying the mask number parameter in \cref{fig:secparameteranalysis_masknum} (the associated table is in \emph{Supplementary Material}, \cref{section:appendix_sectables}). We assume the number of masks in each dimension is the same (i.e., $k := k_1 = k_2$) and evaluate with respect to $k^2$. We keep the patch size parameter its default value of ${\sim}2\%$.

\textbf{Limited tradeoff between computational budget and robustness.} We find that PatchDEMUX provides consistent defended clean and certified robust performance even after greatly reducing the number of masks. For instance, decreasing the number of masks from $36$ to $16$ results in a maximum AP drop of $2$ points across all evaluation settings. At the extreme of $k^2 = 4$ masks more substantial performance drops are noticeable. This is expected, as the mask generation method from PatchCleanser will create larger masks to compensate for reduced mask number; this leads to increased occlusion and fewer certification successes \cite{xiang_patchcleanser_2022}.

\subsection{Impact of varying patch size}
\label{subsection:patchsizeexp}
We present results when varying the patch size estimate in \cref{fig:secparameteranalysis_patch} (the associated table is in \emph{Supplementary Material}, \cref{section:appendix_sectables}). We keep the mask number at its default value of $6 \times 6$ masks.

\textbf{Strong clean performance over different patch sizes.} We find that the defended clean performance of PatchDEMUX is resilient to increasing patch size; indeed, clean AP only drops from $85.731$ in the smallest patch setting to $69.952$ in the largest. Thus, even in unlikely scenarios (i.e., a patch size of $\geq32\%$ would be easily detectable by hand) PatchDEMUX maintains strong inference performance. For the certified robust settings, PatchDEMUX provides relatively strong robustness guarantees on smaller patches (i.e., $\leq2\%$) and performance degrades for larger patches (i.e., $\geq8\%$); certified AP drops close to $0\%$ when a patch size of $32\%$ is considered. These trends align with experiments done by \citet{xiang_patchcleanser_2022} in the single-label classification domain; the general intuition is that larger patch sizes require PatchCleanser to generate larger masks, making certification failures more likely \cite{xiang_patchcleanser_2022}.

\subsection{Overall takeaways}
\label{subsection:takeaways}
Overall, we find that PatchDEMUX performance tradeoffs corroborate with findings from \citet{xiang_patchcleanser_2022}. This illustrates a key feature of our defense framework: PatchDEMUX successfully adapts the strengths of underlying single-label CDPAs to the multi-label classification setting.
\section{Related Work}
\label{section:relatedwork}
\textbf{Certifiable defenses against patch attacks.} CDPAs have been designed for various computer vision applications. In single-label classification, defense strategies include bound propagation methods \cite{chiang_certified_2020}, small receptive field methods \cite{levine_randomized_2021, metzen_efficient_2021, xiang_patchguard_2021-1, xiang_patchguard_2021}, and masking methods \cite{xiang_patchcleanser_2022}. CDPAs have also been proposed for object detection \cite{xiang_objectseeker_2022} and semantic segmentation \cite{yatsura_certified_2023}, although notions of certifiable robustness are more difficult to define in these domains.

\textbf{Certifiable defenses in multi-label classification.} \citet{jia_multiguard_2022} proposed MultiGuard, a certifiably robust defense for multi-label classifiers that generalizes randomized smoothing \cite{cohen_certified_2019}. However, MultiGuard is designed to protect against $\ell_2$-norm attacks and does not address adversarial patches.

\section{Conclusion}
\label{section:conclusion}
The threat of adversarial patch attacks has compromised real-world computer vision systems, including those that depend on multi-label classifiers. To this end we introduced PatchDEMUX, a certifiably robust framework for multi-label classifiers against adversarial patches. PatchDEMUX can extend any existing single-label CDPA, including the current SOTA single-label CDPA PatchCleanser, and demonstrates strong performance on the MS-COCO and PASCAL VOC datasets. We hope that future work will take advantage of our modular framework to significantly mitigate the impact of adversarial patches. 
\section{Acknowledgements}
\label{section:Acknowledgements}
We would like to thank the anonymous CVPR reviewers for their helpful feedback. This work was supported by National Science Foundation grants IIS-2229876 (the ACTION center) and CNS-2154873. Prateek Mittal acknowledges the support of NSF grant CNS-2131938, Princeton SEAS Innovation award, and OpenAI \& FarAI superalignment grants.
{
    \small
    \bibliographystyle{ieeenat_fullname}
    \bibliography{author-kit-CVPR2025-v3.1-latex-/sources}
}

\clearpage

\setcounter{page}{1}
\maketitlesupplementary
\appendix

\section{Certification Robustness Proofs}
\label{section:appendix_certproofs}

\subsection{Baseline certification correctness}
\label{subsection:baselinecertcorrect}

In this section, we provably demonstrate robustness for our baseline certification procedure. Specifically, we prove \cref{thm:patchdemuxcorrectnessclass}, which ensures correctness of the bounds returned by \cref{alg:kclasscert}. For convenience, we re-state the theorem.

\patchdemuxcorrectess*

\begin{proof}
We first demonstrate that classes included in $TP_{lower}$ will be guaranteed correctness. Consider an arbitrary class $i^* \in \{1, 2, \dots, c\}$ with label $\mathbf{y}[i^*] = 1$. If this class is included in $TP_{lower}$, then we must have $\bm{\kappa}[i^*] = 1$ (i.e., line $9$ in \cref{alg:kclasscert}). This implies that on line $5$ we must have $SL\text{-}CERT_{[\mathbb{F}[i^*], \sigma]}(\mathbf{x}, \mathbf{y}[i^*], \mathcal{R}) = 1$. Now consider when \cref{alg:inference_patchdemux} reaches index $i^* \in \{1, 2, \dots, c\}$ in the \emph{for} loop on line $3$. Because the datapoint $(\mathbf{x}, \mathbf{y}[i^*])$ was certifiable, by \cref{def:certdefensesslpatches} we will have
\begin{equation*}
  SL\text{-}INFER_{[\mathbb{F}[i^*], \sigma]}(\mathbf{x'}) = 1 \quad \forall \mathbf{x}' \in S_{\mathbf{x}, \mathcal{R}}
\end{equation*}
This implies that every class accounted for in $TP_{lower}$ will be successfully recovered by \cref{alg:inference_patchdemux} regardless of the attempted patch attack.

Next, we demonstrate that classes included in $FN_{upper}$ will not be guaranteed correctness. Consider an arbitrary class $i^* \in \{1, 2, \dots, c\}$ with label $\mathbf{y}[i^*] = 1$. In this case we will have $\bm{\kappa}[i^*] = 0$, and thus classes included in $FN_{upper}$ will have $SL\text{-}CERT_{[\mathbb{F}[i^*], \sigma]}(\mathbf{x}, \mathbf{y}[i^*], \mathcal{R}) = 0$. Now consider when \cref{alg:inference_patchdemux} reaches index $i^* \in \{1, 2, \dots, c\}$ in the \emph{for} loop on line $3$. By \cref{def:certdefensesslpatches} it is possible that 
\begin{equation*}
  \exists \mathbf{x}' \in S_{\mathbf{x}, \mathcal{R}} \quad | \quad SL\text{-}INFER_{[\mathbb{F}[i^*], \sigma]}(\mathbf{x'}) = 0
\end{equation*}
Essentially, in the worst-case scenario these classes might be mispredicted and be false negatives. Thus, none of the classes included in $FN_{upper}$ can be guaranteed correctness. Because every class with $\mathbf{y}[i^*] = 1$ will be accounted for by either $TP_{lower}$ or $FN_{upper}$ (mutually exclusive), we conclude that $TP_{lower}$ will be the correct lower bound for objects recovered and $FN_{upper}$ will be the correct upper bound for objects missed.

The correctness of the $FP_{upper}$ bound can be shown in a similar fashion, albeit by considering classes with $\mathbf{y}[i^*] = 0$.

\end{proof}

\subsection{Location-aware certification correctness}
\label{subsection:loccertcorrect}

In this section we demonstrate the correctness of our novel location-based certification method. To do so, it is helpful to use the following lemma.

\begin{restatable}[\cref{alg:newcert} Tightness]{lemma}{histtight}
\label{lemma:histtight}
    Given that we have derived a bound on $FN$ using the technique from \cref{alg:kclasscert}, \cref{alg:newcert} will return a new bound $FN_{new} \leq FN$.
\end{restatable}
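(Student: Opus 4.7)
The plan is to exploit the fact that each vulnerability status array $\bm{\lambda}$ produced by $SL\text{-}CERT$ is a binary vector, so the inverted array $\mathbf{1} - \bm{\lambda}$ also takes values in $\{0, 1\}$ at every patch location $\mathbf{r} \in \mathcal{R}$. The value of $FN$ returned by \cref{alg:kclasscert} equals exactly the number of false-negative classes that failed the single-label certification, and this is precisely the number of rows of $fnCertFails$ constructed in the $for$ loop on lines $8$--$12$ of \cref{alg:newcert}. Hence the key quantity to control is a sum of $FN$ indicator variables.

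Next, I would fix an arbitrary patch location $\mathbf{r} \in \mathcal{R}$ and observe that the column-sum entry $fnTotal[\mathbf{r}] = \sum_{k=1}^{FN} (\mathbf{1} - \bm{\lambda}_k)[\mathbf{r}]$ is a sum of $FN$ quantities each lying in $\{0, 1\}$. Thus $fnTotal[\mathbf{r}] \leq FN$, and taking a maximum over $\mathbf{r}$ preserves the inequality, yielding $FN_{new} = \max_{\mathbf{r}} fnTotal[\mathbf{r}] \leq FN$, which is exactly the desired bound.

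Intuitively, the baseline bound from \cref{alg:kclasscert} pessimistically assumes that every failing class can be attacked simultaneously, whereas \cref{alg:newcert} enforces that the single adversarial patch must share a common location across the failing classes; this constraint can only reduce the count and gives a strict improvement whenever no single location is vulnerable for all $FN$ classes at once (the scenario illustrated in \cref{fig:locationaware_ex}).

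The main obstacle is essentially notational rather than technical: care is needed to ensure that the $FN$ value used in the construction of $fnCertFails$ matches the $FN$ returned by \cref{alg:kclasscert}. This follows from the fact that the $for$ loop in \cref{alg:newcert} iterates exactly over the indices in $fnIdx$, defined on line $5$ as the classes with $\bm{\kappa}[i] = 0$ and $\mathbf{y}[i] = 1$, which are precisely the classes contributing to $FN_{upper}$ in \cref{alg:kclasscert}. Once this correspondence is stated explicitly, the two-line bound above completes the argument.
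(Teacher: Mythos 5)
Your proof is correct and follows essentially the same route as the paper's: both arguments bound each column sum $fnTotal[\mathbf{r}]$ by $FN$ because it is a sum of $FN$ indicator values, and note that the maximum over locations therefore cannot exceed $FN$. Your version is slightly more explicit about the correspondence between the $FN$ count from \cref{alg:kclasscert} and the number of rows of $fnCertFails$, which is a welcome clarification but not a different approach.
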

\begin{proof}
We will show that $FN_{new}$ provides a tighter bound (i.e., the inequality $FN_{new} \leq FN$ is true). To see this, we note as per lines $13$ and $14$ of \cref{alg:newcert} that the worst-case sum will occur if some patch location is vulnerable for every false negative. Because summation is done over the set of false negatives, this implies the worst-case sum is $FN$.
\end{proof}

We also provide a formal definition for the concept of a \emph{vulnerability status array}; recall that this array extends the certification procedure for a single-label CDPA (\cref{subsubsection:singlelabellocal}). We leverage similar notation as \cref{eq:linfty}. 

\begin{definition}[Vulnerability status array]
\label{def:vulstatusarray}
    Suppose we have datapoint $(\mathbf{x}, y)$, a single-label classifier $\mathbb{F}_s: \mathcal{X} \rightarrow \{1, 2, \dots, c\}$, a certification procedure $SL\text{-}CERT$ with security parameters $\sigma$ from a single-label CDPA, and patch locations $\mathcal{R}$. Then we define the \emph{vulnerability status array} $\bm{\lambda} := SL\text{-}CERT_{[\mathbb{F}_s, \sigma]}(\mathbf{x}, y, \mathcal{R}) \in \{0, 1\}^{|\mathcal{R}|}$ such that if $\bm{\lambda}[\mathbf{r}] = 1$ for a patch location $\mathbf{r} \in \mathcal{R}$ then\footnote{For the term $\bm{\lambda}[\mathbf{r}] = 1$ we slightly abuse notation and use $\mathbf{r}$ to refer to the index associated with the patch location}
    \begin{equation*}
      SL\text{-}INFER_{[\mathbb{F}_s, \sigma]}(\mathbf{r} \circ \mathbf{x} + (\mathbf{1} - \mathbf{r}) \circ \mathbf{x'}) = y \quad \forall \mathbf{x}' \in \mathcal{X}
    \end{equation*}
\end{definition}
\noindent
Essentially, the vulnerability status array $\bm{\lambda}$ denotes the certification status of individual patch locations $\mathbf{r} \in \mathcal{R}$.

We can now prove \cref{thm:histcorrect}. For convenience we re-state the theorem.

\histcorrect*

\begin{proof}
We will demonstrate the correctness and tightness of the new bound $FN_{new}$ proposed in \cref{alg:newcert}. We first note as per \cref{lemma:histtight} that $FN_{new} \leq FN$; this ensures that the new bound will be stronger than \cref{alg:kclasscert}. In the case with equality $FN_{new} = FN$, correctness is guaranteed by \cref{thm:patchdemuxcorrectnessclass}. We thus focus on the case with strict inequality $FN_{new} < FN$.

Define $\mathbf{r_{opt}} \in \mathcal{R}$ as the patch location which induces the maximum number of false negatives on line $14$ of \cref{alg:newcert}. By assumption, a total of $FN - FN_{new} > 0$ false negatives will have contributed a value of $0$ to the sum $fnTotal[\mathbf{r_{opt}}]$ on line $13$. Consider an arbitrary such class $i^* \in \{1, 2, \dots, c\}$. Because the $fnCertFails$ value for this class at patch location $\mathbf{r_{opt}}$ is $0$, on line $10$ we must have for $\bm{\lambda} := SL\text{-}CERT_{[\mathbb{F}[i^*], \sigma]}(\mathbf{x}, \mathbf{y}[i^*], \mathcal{R})$
\begin{equation*}
    \bm{\lambda}[\mathbf{r_{opt}}] = 1
\end{equation*}
As per \cref{def:vulstatusarray}, this means that we will have 
\begin{equation*}
    SL\text{-}INFER_{[\mathbb{F}[i^*], \sigma]}(\mathbf{r_{opt}} \circ \mathbf{x} + (\mathbf{1} - \mathbf{r_{opt}}) \circ \mathbf{x'}) = 1 \quad \forall \mathbf{x'} \in \mathcal{X}
\end{equation*}
In other words, $SL\text{-}INFER$ will be robust against any patch attack contained in location $\mathbf{r_{opt}} \in \mathcal{R}$. Because the patch must be placed at the optimal location $\mathbf{r_{opt}}$, this implies that \cref{alg:inference_patchdemux} will return the correct prediction for class $i^*$ as desired. Overall, each of the $FN - FN_{new}$ classes will now be certified true positives instead of false negatives, and thus the new bounds from \cref{alg:newcert} will be correct.
\end{proof}

\section{Double-masking Algorithm from PatchCleanser}
\label{section:appendix_patchcleanser}

In this section, we provide a brief outline of the double-masking algorithm from the PatchCleanser defense and how it integrates into the PatchDEMUX framework; recall from \cref{subsection:methodology} that PatchCleanser is the current SOTA single-label CDPA. For more details, we direct the reader to the original reference by \citet{xiang_patchcleanser_2022}.

\subsection{Double-masking overview}
\label{subsection:doublemaskingoverview_PC}
At a glance, the double-masking algorithm works by curating a specialized set of masks, $\mathcal{M} \subseteq \{0, 1\}^{w \times h}$, to recover the output label $y \in \{1, 2, \dots, c\}$ for certifiable input images $\mathbf{x} \in \mathcal{X}$ \cite{xiang_patchcleanser_2022}. More specifically, these masks satisfy the following \emph{$\mathcal{R}$-covering} property from \citet{xiang_patchcleanser_2022}.
\begin{definition}[$\mathcal{R}$-covering]
\label{def:rcovering}
  A mask set $\mathcal{M}$ is $\mathcal{R}$-covering if, for any patch in the patch region set $\mathcal{R}$, at least one mask from the mask set $\mathcal{M}$ can cover the entire patch, i.e., 
  \begin{equation*}
      \forall \mathbf{r} \in \mathcal{R}, \exists \mathbf{m} \in \mathcal{M} \quad s.t. \quad \mathbf{m}[i, j] \leq \mathbf{r}[i, j], \forall (i, j)
  \end{equation*}
\end{definition}
\noindent
Here $\mathcal{R}$ refers to the set of patch locations from \cref{eq:linfty}, and $\mathcal{M}$ represents binary matrices where elements inside the mask are $0$ and elements outside the mask are $1$ \cite{xiang_patchcleanser_2022}. Given an input image size $n_1 \times n_2$, an upper estimate on patch size\footnote{PatchCleanser provides an option to specify the patch size for each axis; we simplify the notation here for convenience} $p$, and number of desired masks $k_1 \times k_2$, a procedure from \citet{xiang_patchcleanser_2022} can readily create a mask set $\mathcal{M}$ with stride length $s_1 \times s_2$ and mask size $m_1 \times m_2$ which is $\mathcal{R}$-covering. The patch size $p$ and mask number $k_1 \times k_2$ serve as security parameters, where the former corresponds to the threat level of $\mathcal{R}$ (i.e., larger patches will necessitate larger masks) and the latter represents a computational budget (i.e., more masks will require more checks to be performed) \cite{xiang_patchcleanser_2022}.

Once the $\mathcal{R}$-covering mask set $\mathcal{M}$ is generated, the double-masking inference procedure removes the patch by selectively occluding the image $\mathbf{x} \in \mathcal{X}$ with mask pairs $\mathbf{m_0}, \mathbf{m_1} \in \mathcal{M} \times \mathcal{M}$. Correctness is verified through the associated certification procedure, which checks if predictions on $\mathbf{x}$ are preserved across all possible mask pairs \cite{xiang_patchcleanser_2022}. 

\subsection{Double-masking inference procedure}
\label{subsection:doublemaskinginference_PC}

\begin{algorithm}[!h]
\caption{\textit{The double-masking inference procedure from PatchCleanser \cite{xiang_patchcleanser_2022}}}\label{alg:inference_doublemasking_PC}
\begin{algorithmic}[1]

\item[] \textbf{Input:} Image $\mathbf{x} \in \mathcal{X}$, single-label classifier $\mathbb{F}_s: \mathcal{X} \rightarrow \{1, 2, \dots, c\}$, $\mathcal{R}$-covering mask set $\mathcal{M}$
\item[] \textbf{Output:} Prediction $\hat{y} \in \{1, 2, \dots, c\}$
\Procedure{DoubleMaskingInfer}{$\mathbf{x}, \mathbb{F}_s, \mathcal{M}$}
\State $\hat{y}_{maj}, \mathcal{P}_{dis} \gets $\Call{MaskPred}{$\mathbf{x}, \mathbb{F}_s, \mathcal{M}$} \Comment{First-round}
\If {$\mathcal{P}_{dis} = \emptyset$}
    \State \Return $\hat{y}_{maj}$ \Comment{Case I: agreed prediction}
\EndIf

\For{each $(\mathbf{m_{dis}}, \hat{y}_{dis}) \in \mathcal{P}_{dis}$} \Comment{Second round}
    \State $\hat{y}', \mathcal{P}' \gets$ \Call{MaskPred}{$\mathbf{x} \circ \mathbf{m_{dis}}, \mathbb{F}_s, \mathcal{M}$}
    \If {$\mathcal{P}' = \emptyset$}
        \State \Return $\hat{y}_{dis}$ \Comment{Case II: disagreer pred.}
    \EndIf
\EndFor
\State \Return $\hat{y}_{maj}$ \Comment{Case III: majority prediction}
\EndProcedure
\[\]
\item[] \textbf{Input:} Image $\mathbf{x} \in \mathcal{X}$, single-label classifier $\mathbb{F}_s: \mathcal{X} \rightarrow \{1, 2, \dots, c\}$, $\mathcal{R}$-covering mask set $\mathcal{M}$
\item[] \textbf{Output:} Majority prediction $\hat{y}_{maj} \in \{1, 2, \dots, c\}$, disagreer masks $\mathcal{P}_{dis}$
\Procedure{MaskPred}{$\mathbf{x}, \mathbb{F}_s, \mathcal{M}$}
    \State $\mathcal{P} \gets \emptyset$ \Comment{A set for mask-prediction pairs}
    \For{$\mathbf{m} \in \mathcal{M}$} \Comment{Enumerate every mask $\mathbf{m}$}
        \State $\hat{y} \gets \mathbb{F}_s(\mathbf{x} \circ \mathbf{m})$ \Comment{Evaluate masked prediction}
        \State $\mathcal{P} \gets \mathcal{P} \bigcup \{(\mathbf{m}, \hat{y})\}$ \Comment{Update set $\mathcal{P}$}
    \EndFor
    \State $\hat{y}_{maj} \gets \text{argmax}_{y^{*}}|\{(\mathbf{m}, \hat{y}) \in \mathcal{P} | \hat{y} = y^{*}\}|$ \Comment{Majority}
    \State $\mathcal{P}_{dis} \gets \{(\mathbf{m}, \hat{y}) \in \mathcal{P} | \hat{y} \neq \hat{y}_{maj}\}$ \Comment{Disagreers}
    \State \Return{$\hat{y}_{maj}, \mathcal{P}_{dis}$}
\EndProcedure

\end{algorithmic}
\end{algorithm}

The double-masking inference procedure from \citet{xiang_patchcleanser_2022} is outlined in \cref{alg:inference_doublemasking_PC}. It works by running up to two rounds of masking on the input image $\mathbf{x} \in \mathcal{X}$.  In each round, the single-label classifier $\mathbb{F}_s: \mathcal{X} \rightarrow \{1, 2, \dots, c\}$ is queried on copies of $\mathbf{x}$ which have been augmented by masks $\mathbf{m} \in \mathcal{M}$ \cite{xiang_patchcleanser_2022}.

\begin{itemize}
    \item \emph{First-round masking:} The classifier runs $\mathbb{F}_s(\mathbf{m} \circ \mathbf{x})$ for every mask $\mathbf{m} \in \mathcal{M}$ (line $2$). If there is consensus, this is returned as the overall prediction (line $4$); the intuition is that a clean image with no patch will be predicted correctly regardless of the mask present \cite{xiang_patchcleanser_2022}. Otherwise, the minority/``disagreer'' predictions trigger a second-round of masking (line $6$). This is done to determine whether to trust the majority prediction $\hat{y}_{maj}$ or one of the disagreers \cite{xiang_patchcleanser_2022}.
    
    \item \emph{Second-round masking:} For each disagreer mask $\mathbf{m_{dis}}$, the classifier runs $\mathbb{F}_s(\mathbf{x} \circ \mathbf{m_{dis} \circ \mathbf{m}} )$ for every mask $\mathbf{m} \in \mathcal{M}$ to form \emph{double-mask predictions} \cite{xiang_patchcleanser_2022}. If there is consensus, the disagreer label $\hat{y}_{dis}$ associated with $\mathbf{m_{dis}}$ is returned as the overall prediction (lines $6-10$). The intuition is that consensus is likely to occur if $\mathbf{m_{dis}}$ successfully covered the patch \cite{xiang_patchcleanser_2022}. Otherwise, $\mathbf{m_{dis}}$ is ignored and the next available disagreer mask is considered; the assumption here is that $\mathbf{m_{dis}}$ failed to cover the patch \cite{xiang_patchcleanser_2022}. Finally, if none of the disagreer masks feature consensus the majority label $\hat{y}_{maj}$ from the first-round is returned instead (line $12$).   
\end{itemize}
\noindent
A key property of this method is that it is architecture agnostic and can be integrated with any single-label classifier \cite{xiang_patchcleanser_2022}.

\subsection{Double-masking certification procedure}
\label{subsection:doublemaskingcertification_PC}

\begin{algorithm}[!h]
\caption{\textit{The double-masking certification procedure from PatchCleanser \cite{xiang_patchcleanser_2022}}}\label{alg:cert_doublemasking_PC}
\begin{algorithmic}[1]
\item[] \textbf{Input:} Image $\mathbf{x} \in \mathcal{X}$, ground-truth $y \in \{1, 2, \dots, c\}$, single-label classifier $\mathbb{F}_s: \mathcal{X} \rightarrow \{1, 2, \dots, c\}$, patch locations $\mathcal{R}$, $\mathcal{R}$-covering mask set $\mathcal{M}$
\item[] \textbf{Output:} Overall certification status of $(\mathbf{x}, y)$, vulnerability status array $\bm{\lambda} \in \{0, 1\}^{|\mathcal{M}|}$ 
\Procedure{DoubleMaskingCert}{$\mathbf{x}, y, \mathbb{F}_s, \mathcal{R}, \mathcal{M}$}
    \State $certVal \gets 1$
    \State $\bm{\lambda} \gets [1]^{|\mathcal{M}|}$
    \If{$\mathcal{M}$ is not $\mathcal{R}$-covering} \Comment{Insecure mask set}
        \State \Return{$0$, $[0]^{|\mathcal{M}|}$}
    \EndIf
    \For{every $(\mathbf{m_0}, \mathbf{m_1}) \in \mathcal{M} \times \mathcal{M}$}
        \State $\hat{y}' \gets \mathbb{F}_s(\mathbf{x} \circ \mathbf{m_0} \circ \mathbf{m_1})$ \Comment{Two-mask prediction}
        \If{$\hat{y'} \neq y$}
            \State $certVal \gets 0$ \Comment{Input possibly vulnerable}
            \State $\bm{\lambda}[\mathbf{m_0}], \bm{\lambda}[\mathbf{m_1}] \gets 0, 0$ \Comment{Vulnerable masks}
        \EndIf
    \EndFor
    \State \Return{$certVal, \bm{\lambda}$}
\EndProcedure

\end{algorithmic}
\end{algorithm}

The double-masking certification procedure from \citet{xiang_patchcleanser_2022} is outlined in \cref{alg:cert_doublemasking_PC}; we extend the original version to additionally return a vulnerability status array $\bm{\lambda}$. It works by first ensuring that the mask set $\mathcal{M}$ is $\mathcal{R}$-covering (line $4$); otherwise, no guarantees on robustness can be made. Then, during the $for$ loop on lines $7-13$ the procedure computes $\mathbb{F}_s(\mathbf{x} \circ \mathbf{m_0} \circ \mathbf{m_1})$ for every possible mask pair $\mathbf{m_0}, \mathbf{m_1} \in \mathcal{M} \times \mathcal{M}$ \cite{xiang_patchcleanser_2022}. If all of the predictions are the label $y$, then $(\mathbf{x}, y)$ is certifiable and $certVal$ is set to $1$; recall from \cref{def:certdefensesslpatches} that this implies that the inference procedure \cref{alg:inference_doublemasking_PC} will be correct regardless of an attempted patch attack. Otherwise, $certVal$ is set to $0$ and the $\bm{\lambda}$ array is updated to reflect vulnerable points.

The correctness of $certVal$ is guaranteed by the following theorem. Essentially, if predictions across all possible mask pairs are correct, it ensures that each of the three cases in \cref{alg:inference_doublemasking_PC} will work as intended \cite{xiang_patchcleanser_2022}.
\begin{theorem}
\label{thm:doublemaskingcorrectness_PC}
Suppose we have an image data point $(\mathbf{x}, y)$, a single-label classification model $\mathbb{F}_s: \mathcal{X} \rightarrow \{1, 2, \dots, c\}$, a patch threat model $S_{\mathbf{x}, \mathcal{R}}$, and a $\mathcal{R}$-covering mask set $\mathcal{M}$. If $\mathbb{F}_s(\mathbf{x} \circ \mathbf{m_0} \circ \mathbf{m_1}) = y$ for all $\mathbf{m_0}, \mathbf{m_1} \in \mathcal{M} \times \mathcal{M}$, then \cref{alg:inference_doublemasking_PC} will always return a correct label.
\end{theorem}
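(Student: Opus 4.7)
The plan is to pin down, for any adversarial input $\mathbf{x}' \in S_{\mathbf{x},\mathcal{R}}$, a single distinguished mask $\mathbf{m}^* \in \mathcal{M}$ that covers the patch, and then walk through the three control-flow branches of Algorithm 3. By the $\mathcal{R}$-covering hypothesis there exists such an $\mathbf{m}^*$, and its defining property is that $\mathbf{x}' \circ \mathbf{m}^* = \mathbf{x} \circ \mathbf{m}^*$, since masking out the patch region makes the adversarial and clean images indistinguishable. Consequently, for every $\mathbf{m} \in \mathcal{M}$ one has $\mathbf{x}' \circ \mathbf{m}^* \circ \mathbf{m} = \mathbf{x} \circ \mathbf{m}^* \circ \mathbf{m}$, and therefore by the theorem's hypothesis $\mathbb{F}_s(\mathbf{x}' \circ \mathbf{m}^* \circ \mathbf{m}) = y$. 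Taking $\mathbf{m} = \mathbf{m}^*$ also yields $\mathbb{F}_s(\mathbf{x}' \circ \mathbf{m}^*) = y$, so the first-round prediction associated with $\mathbf{m}^*$ is already correct.

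With this ``witness mask'' identified, I would handle the three return points of \textsc{DoubleMaskingInfer} in turn. In Case I the first-round set $\mathcal{P}_{dis}$ is empty, all predictions coincide, and since $\mathbf{m}^*$ is among them the common label must be $y$. In Case III the loop finishes without any disagreer producing consensus, and the algorithm returns $\hat{y}_{maj}$; I would argue that $\hat{y}_{maj}$ must equal $y$ by contradiction, since otherwise $\mathbf{m}^*$ itself would be a disagreer (its first-round label is $y \neq \hat{y}_{maj}$) whose second-round predictions $\mathbb{F}_s(\mathbf{x}' \circ \mathbf{m}^* \circ \mathbf{m})$ are uniformly $y$, hence $\mathcal{P}' = \emptyset$ and the algorithm would have returned early — contradicting being in Case III.

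Case II is the most delicate and is where I expect the main obstacle, because the algorithm returns the first-round label $\hat{y}_{dis}$ of the disagreer rather than the majority of the second round. The key observation I would use is that $\mathbf{m}^* \in \mathcal{M}$ is always among the second-round masks, so one of the predictions enumerated by \textsc{MaskPred}($\mathbf{x}' \circ \mathbf{m}_{dis}, \cdot, \mathcal{M}$) is $\mathbb{F}_s(\mathbf{x}' \circ \mathbf{m}_{dis} \circ \mathbf{m}^*)$. Because $\mathbf{m}^*$ still covers the patch regardless of whether $\mathbf{m}_{dis}$ does, this doubly-masked image equals $\mathbf{x} \circ \mathbf{m}_{dis} \circ \mathbf{m}^*$, and by hypothesis the prediction is $y$. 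Under the Case II assumption $\mathcal{P}' = \emptyset$, the entire second round must therefore agree on $y$, and since one of the second-round masks is $\mathbf{m}_{dis}$ itself we get $\hat{y}_{dis} = \mathbb{F}_s(\mathbf{x}' \circ \mathbf{m}_{dis}) = y$, so the returned label is correct.

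Combining the three cases shows that \textsc{DoubleMaskingInfer} returns $y$ on every $\mathbf{x}' \in S_{\mathbf{x},\mathcal{R}}$. The only subtlety worth highlighting in the write-up is the idempotence $\mathbf{m} \circ \mathbf{m} = \mathbf{m}$ of binary masks, which justifies identifying first-round and doubly-masked predictions whenever the same mask appears twice; this is what lets the hypothesis, stated only for pairs in $\mathcal{M} \times \mathcal{M}$, control single-mask evaluations as well.
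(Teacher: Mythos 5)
Your proof is correct and complete. Note that the paper does not actually prove this theorem itself---it defers to \citet{xiang_patchcleanser_2022}---so your argument should be compared against the paper's proof of the closely related \cref{lemma:doublemasking_lambda}, which runs the same three-scenario analysis under the weaker hypothesis that only the pairs involving a single covering mask $\mathbf{m}^*$ agree. Your decomposition is organized by return point rather than by whether the majority is correct, and your handling of Case II is the cleaner of the two: you show directly that \emph{whenever} a disagreer achieves second-round consensus, that consensus must be $y$ (via $\mathbf{m}^*$'s presence in the second round) and hence $\hat{y}_{dis} = F_s(\mathbf{x}' \circ \mathbf{m}_{dis} \circ \mathbf{m}_{dis}) = y$ by idempotence, whereas the paper's Scenario \#2 instead argues that wrong disagreers can never reach consensus (which implicitly also relies on the disagreer's own double-masked prediction differing from $y$). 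You are also right to flag idempotence $\mathbf{m} \circ \mathbf{m} = \mathbf{m}$ explicitly---it is the step that lets the two-mask hypothesis control the single-mask first-round evaluations, and it is left tacit in the paper. The only cosmetic point: in Case III it is worth stating that the loop provably visits $\mathbf{m}^*$ before terminating (since any earlier return is covered by Case II), which your contradiction argument already uses implicitly.
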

\begin{proof}
    This theorem is proved in \citet{xiang_patchcleanser_2022}.
\end{proof}
We next consider the vulnerability status array $\bm{\lambda} \in \{0, 1\}^{|\mathcal{M}|}$ returned by \cref{alg:cert_doublemasking_PC}. Notice that the length of the array is $|\mathcal{M}|$ rather than $|\mathcal{R}|$; this is a helpful consequence of the $\mathcal{R}$-covering property of the mask set $\mathcal{M}$, which ensures that every patch location $\mathbf{r} \in \mathcal{R}$ will be contained in at least one of the masks $\mathbf{m} \in \mathcal{M}$. As such, an implementation-level abstraction is possible for PatchCleanser where each element $\bm{\lambda}[\mathbf{m}]$ summarizes the vulnerability status for all patch locations contained within the mask $\mathbf{m} \in \mathcal{M}$. The correctness of this construction can be demonstrated through the following lemma. 
\begin{lemma}
\label{lemma:doublemasking_lambda}
Suppose we have an image data point $(\mathbf{x}, y)$, a single-label classification model $\mathbb{F}_s: \mathcal{X} \rightarrow \{1, 2, \dots, c\}$, a patch threat model $S_{\mathbf{x}, \mathcal{R}}$, and a $\mathcal{R}$-covering mask set $\mathcal{M}$. Then the array $\bm{\lambda} \in \{0, 1 \}^{|\mathcal{M}|}$ returned by \cref{alg:cert_doublemasking_PC} will be a valid vulnerability status array that satisfies \cref{def:vulstatusarray}.
\end{lemma}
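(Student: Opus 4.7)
The plan is to interpret each mask-indexed entry $\bm{\lambda}[\mathbf{m}]$ as a vulnerability flag for every patch location $\mathbf{r} \in \mathcal{R}$ that the mask $\mathbf{m}$ covers; the $\mathcal{R}$-covering property guarantees that every $\mathbf{r} \in \mathcal{R}$ is covered by at least one such $\mathbf{m}$, so this translation reaches every patch location of interest. The proof then reduces to showing that whenever $\bm{\lambda}[\mathbf{m}^*] = 1$ and $\mathbf{m}^*$ covers a particular patch $\mathbf{r}$, the inference procedure $SL\text{-}INFER$ of \cref{alg:inference_doublemasking_PC} returns the ground truth $y$ on any attacked image whose patch sits at $\mathbf{r}$.

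First I would fix such an $\mathbf{m}^*$, an $\mathbf{r}$ with $\mathbf{m}^*[i,j] \leq \mathbf{r}[i,j]$ pointwise, and an arbitrary $\mathbf{x}' \in \mathcal{X}$, writing $\tilde{\mathbf{x}} := \mathbf{r} \circ \mathbf{x} + (\mathbf{1} - \mathbf{r}) \circ \mathbf{x}'$. A short pixel-level check is the first small step: wherever $\mathbf{m}^*[i,j] = 0$ the outer product vanishes, and wherever $\mathbf{m}^*[i,j] = 1$ the covering relation forces $\mathbf{r}[i,j] = 1$, so the $(i,j)$-pixel of $\tilde{\mathbf{x}}$ equals that of $\mathbf{x}$. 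This yields the key identity $\tilde{\mathbf{x}} \circ \mathbf{m}^* \circ \mathbf{m} = \mathbf{x} \circ \mathbf{m}^* \circ \mathbf{m}$ for every $\mathbf{m} \in \mathcal{M}$. Inspecting \cref{alg:cert_doublemasking_PC}, the only way $\bm{\lambda}[\mathbf{m}^*]$ stays at its initial value of $1$ is that no pair $(\mathbf{m}^*, \mathbf{m})$ (or $(\mathbf{m}, \mathbf{m}^*)$) ever triggered line $11$, i.e., $\mathbb{F}_s(\mathbf{x} \circ \mathbf{m}^* \circ \mathbf{m}) = y$ for every $\mathbf{m} \in \mathcal{M}$. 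Combining with the identity gives $\mathbb{F}_s(\tilde{\mathbf{x}} \circ \mathbf{m}^* \circ \mathbf{m}) = y$ for all $\mathbf{m} \in \mathcal{M}$.

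Next I would case-split on the branch of \cref{alg:inference_doublemasking_PC} triggered by $\tilde{\mathbf{x}}$, mirroring \cref{thm:doublemaskingcorrectness_PC} but anchoring every argument to $\mathbf{m}^*$. In Case~I (first-round consensus) the first-round prediction at $\mathbf{m} = \mathbf{m}^*$ equals $\mathbb{F}_s(\tilde{\mathbf{x}} \circ \mathbf{m}^* \circ \mathbf{m}^*) = y$ by idempotence, so the consensus label must be $y$. In Case~II (some disagreer $\mathbf{m}_{\text{dis}}$ reaches second-round consensus) the second-round prediction with inner mask $\mathbf{m}^*$ equals $y$, pinning the consensus value at $y$; taking the inner mask to be $\mathbf{m}_{\text{dis}}$ itself forces the returned first-round label $\hat{y}_{\text{dis}}$ to equal $y$ as well. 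In Case~III (no disagreer had second-round consensus, majority returned), assume $\hat{y}_{\text{maj}} \neq y$ for contradiction; then $\mathbf{m}^*$'s first-round prediction $y$ would make $\mathbf{m}^*$ itself a disagreer, yet its second round is forced to consensus at $y$ by the identity above, which would have fired Case~II, contradicting the premise of Case~III. Hence $\hat{y}_{\text{maj}} = y$.

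The hard part is the bookkeeping in Cases~II and~III: \cref{alg:inference_doublemasking_PC} returns on the \emph{first} disagreer that reaches second-round consensus, so Case~III has to be handled by the contradiction argument above rather than by a direct computation, and Case~II requires the extra observation that the returned first-round label $\hat{y}_{\text{dis}}$ coincides with the second-round consensus value via the idempotence trick $\mathbf{m}_{\text{dis}} \circ \mathbf{m}_{\text{dis}} = \mathbf{m}_{\text{dis}}$. Everything else reduces to the pixel identity $\tilde{\mathbf{x}} \circ \mathbf{m}^* = \mathbf{x} \circ \mathbf{m}^*$ and direct inspection of the certification loop. Since $\mathbf{r}$, $\mathbf{x}'$, and $\mathbf{m}^*$ were arbitrary, the translated $\bm{\lambda}$ satisfies \cref{def:vulstatusarray}, completing the proof.
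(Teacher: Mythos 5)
Your proof is correct and follows essentially the same route as the paper's: both reduce to the two-mask robustness property $\mathbb{F}_s(\mathbf{x} \circ \mathbf{m^*} \circ \mathbf{m}) = y$ guaranteed by $\bm{\lambda}[\mathbf{m^*}]=1$, exploit the covering identity $\tilde{\mathbf{x}} \circ \mathbf{m^*} = \mathbf{x} \circ \mathbf{m^*}$, and then walk through the three return branches of the double-masking inference algorithm. Your Case~II argument --- showing that \emph{any} disagreer reaching second-round consensus must satisfy $\hat{y}_{dis}=y$ via the idempotence $\mathbf{m_{dis}} \circ \mathbf{m_{dis}} = \mathbf{m_{dis}}$ --- handles the ``first disagreer to return'' ordering a bit more cleanly than the paper's argument that incorrect disagreers never reach consensus, but the substance is the same.
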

\begin{proof}
    Define $\mathcal{R^*} \subseteq \mathcal{R}$ as the set of patch locations contained in an arbitrary mask $\mathbf{m^*} \in \mathcal{M}$. To demonstrate the validity of $\bm{\lambda}$, we need to show that $\bm{\lambda}[\mathbf{m^*}] = 1$ implies \cref{alg:inference_doublemasking_PC} will be protected from all attacks located in $\mathcal{R^*}$. To do so, we first note that we will only have $\bm{\lambda}[\mathbf{m^*}] = 1$ in \cref{alg:cert_doublemasking_PC} if $\mathbb{F}_s(\mathbf{x} \circ \mathbf{m^*} \circ \mathbf{m}) = y$ for all $\mathbf{m} \in \mathcal{M}$; otherwise, $\bm{\lambda}[\mathbf{m^*}]$ would have been marked with $0$ at some point. 

    We can use this robustness property to guarantee correctness in \cref{alg:inference_doublemasking_PC}. Suppose we have an arbitrary patch attack with a location in $\mathcal{R^*}$ and that $\bm{\lambda}[\mathbf{m^*}] = 1$. In the first-round masking stage the attack will be completely covered by the mask $\mathbf{m^*}$ (due to the $\mathcal{R}$-covering property) and form the masked image $\mathbf{x} \circ \mathbf{m^*} \in \mathcal{X}$. Note that this is the same as the image $\mathbf{x} \circ \mathbf{m^*} \circ \mathbf{m^*} \in \mathcal{X}$; therefore, the robustness property from above will guarantee that $\mathbb{F}_s(\mathbf{x} \circ \mathbf{m^*}) = y$. We have thus shown that the correct prediction will be represented at least once in the first-round, leaving three possible scenarios.
    \begin{itemize}
        \item \emph{Scenario \#1 (consensus):} In this scenario, the classifier returns the correct prediction $y$ for every first-round mask. Then, line $4$ of \cref{alg:inference_doublemasking_PC} will ensure that $y$ is correctly returned as the overall prediction.
        \item \emph{Scenario \#2 (majority of masks are correct):} In this scenario, the classifier returns the correct prediction $y$ for the majority of first-round masks. The set of disagreer masks, $\mathcal{M}_{dis} \subseteq \mathcal{M}$, will thus run a second round of masking. This eventually requires computing $\mathbb{F}_s(\mathbf{x} \circ \mathbf{m_{dis}} \circ \mathbf{m^*})$ for each $\mathbf{m_{dis}} \in \mathcal{M}_{dis}$. By leveraging symmetry and the robustness property from earlier, these are all guaranteed to return the correct prediction $y$. Therefore, none of the disagreer masks will have consensus in the second-round, and line $12$ of \cref{alg:inference_doublemasking_PC} will ensure that $y$ is correctly returned as the overall prediction.
        \item \emph{Scenario \#3 (minority of masks are correct):} In this scenario, the classifier returns the correct prediction $y$ for a minority of first-round masks. This implies that $\mathbf{m^*}$ will be a disagreer mask. During the second round of masking, the robustness property from earlier will ensure that $\mathbb{F}_s(\mathbf{x} \circ \mathbf{m^*} \circ \mathbf{m}) = y$ for each $\mathbf{m} \in \mathcal{M}$. Therefore, we will have consensus in the second round of $\mathbf{m^*}$. Because disagreers with incorrect predictions will fail to have consensus (i.e., using the logic from \emph{Scenario \#2}), line $9$ of \cref{alg:inference_doublemasking_PC} will ensure that $y$ is correctly returned as the overall prediction.
    \end{itemize}
    
    Overall, we conclude that \cref{alg:inference_doublemasking_PC} will return the correct prediction $y$. We have thus shown that $\bm{\lambda}[\mathbf{m^*}] = 1$ implies \cref{alg:inference_doublemasking_PC} will be protected from any arbitrary patch attack located in $\mathcal{R^*}$, as desired.
\end{proof}

\subsection{Integration with PatchDEMUX}
\label{subsection:patchdemux_pc}
To integrate PatchCleanser into the PatchDEMUX framework, we first generate a $\mathcal{R}$-covering set of masks $\mathcal{M}$; the mask set $\mathcal{M}$ essentially serves as a holistic representation of the security parameters $\sigma$. We then incorporate \cref{alg:inference_doublemasking_PC} into the PatchDEMUX inference procedure (\cref{alg:inference_patchdemux}) and \cref{alg:cert_doublemasking_PC} into the PatchDEMUX certification procedure (\cref{alg:kclasscert}). Finally, we use the location-aware certification method (\cref{alg:newcert}) with the vulnerability status arrays expressed in terms of masks. 

\section{Further Details on Evaluation Metrics}
\label{section:appendix_evalmetrics}
In this section, we discuss the evaluation metrics from \cref{section:mainresults} and \cref{section:securityparam} in more detail.

\subsection{Threshold analysis}
\label{subsection:threshold}
We evaluate multi-label classifiers by computing precision and recall metrics over a variety of different thresholds; classes with output higher than the threshold are predicted $1$, otherwise $0$. This helps establish a large set of evaluation data from which to build precision-recall plots. We start by evaluating a set of \emph{standard thresholds}:
\begin{equation*}
    T_{standard} := \{0.0, 0.1, 0.2, 0.3, 0.4, 0.5, 0.6, 0.7, 0.8, 0.9\}
\end{equation*}
We then evaluate a set of \emph{high-value thresholds}. This helps fill out the \emph{low recall-high precision} region of a precision-recall curve:
\begin{equation*}
    T_{high} := \{0.91, 0.92, 0.93, 0.94, 0.95, 0.96, 0.97, 0.98, 0.99\}
\end{equation*}
We next evaluate a set of \emph{very high-value thresholds}. These evaluations provide points at which recall is close to $0\%$:
\begin{equation*}
    T_{very \: high} := \{0.999, 0.9999, 0.99999\}
\end{equation*}
Finally, we evaluate a set of mid-value thresholds. These help to smoothen out a precision-recall curve:
\begin{equation*}
    T_{mid} := T_{mid \: 1} \cup T_{mid \: 2} \cup T_{mid \: 3} \cup T_{mid \: 4}
\end{equation*}
where 
\begin{equation*}
\begin{split}
    T_{mid \: 1} := \{0.5 + 0.02 \cdot t : t \in \{1, 2, 3, 4\}\} \\
    T_{mid \: 2} := \{0.6 + 0.02 \cdot t : t \in \{1, 2, 3, 4\}\} \\
    T_{mid \: 3} := \{0.7 + 0.02 \cdot t : t \in \{1, 2, 3, 4\}\} \\
    T_{mid \: 4} := \{0.8 + 0.02 \cdot t : t \in \{1, 2, 3, 4\}\}
\end{split}
\end{equation*}

For ViT-based models specifically, we found that the \emph{low precision-high recall} region of a precision-recall curve does not readily appear if we limit evaluation to the thresholds outlined above. We thus further evaluate the following set of \emph{low-value} thresholds for ViT-based models:
\begin{equation*}
    T_{low} := \{5\cdot 10^{-5}, 10^{-4}, 5\cdot 10^{-4}, 10^{-3}, 5\cdot 10^{-3}, 0.01, 0.05\}
\end{equation*}

In order to obtain precision values at key recall levels (i.e., $25\%$, $50\%$, $75\%$), we can perform linear interpolation between relevant recall bounds. However, recall values computed using the thresholds above are often not close enough to these target values. To this end, we use an iterative bisection scheme to find overestimated and underestimated bounds within $0.5$ points of the target recalls. The precision values are then calculated by linearly interpolating between these bounds.

\subsection{Computing average precision}
\label{subsection:aucmetrics}

In order to compute an approximation for average precision, we leverage the area-under-the-curve (AUC) of the associated precision-recall curves. However, in practice the threshold analysis from \cref{subsection:threshold} can result in different leftmost points for the precision-recall curves. In order to enforce consistency, we fix the leftmost points for each precision-recall plot at exactly $25\%$ recall. Then, the AUC is computed using the trapezoid sum technique and normalized by a factor of $0.75$ (i.e., the ideal precision-recall curve). Note that we pick $25\%$ recall because a few evaluations under this value demonstrate floating-point precision errors (i.e., the required threshold is too high).

\section{Resnet Architecture Analysis}
\label{section:appendix_resnet}
\begin{table*}[!ht]
    \centering
    \caption{\textit{PatchDEMUX performance with Resnet architecture on the MS-COCO 2014 validation dataset. Precision values are evaluated at key recall levels along with the approximated average precision. We assume the patch attack is at most $2\%$ of the image area and use a computational budget of $6 \times 6$ masks.}}
    \subcaptionbox{\textit{Clean setting precision values}\label{tab:resnetbesttable_clean}}{
    \begin{tabular}{lcccc}
        \toprule
        \textbf{Architecture} & \multicolumn{4}{c}{Resnet} \\
        \cmidrule(l){1-1}\cmidrule(l){2-5}
        \textbf{Clean recall} & $25\%$ & $50\%$ & $75\%$ & $AP$ \\
        \midrule
        \textit{Undefended} & 99.832 & 99.425 & 92.341 & 87.608\\
        \textit{Defended} & 99.835 & 98.257 & 80.612 & 81.031 \\
        \bottomrule
    \end{tabular}
    }
    \subcaptionbox{\textit{Certified robust setting precision values}\label{tab:resnetbesttable_robust}}{
    \begin{tabular}{lcccc}
        \toprule
        \textbf{Architecture} & \multicolumn{4}{c}{Resnet} \\
        \cmidrule(l){1-1}\cmidrule(l){2-5}
        \textbf{Certified recall} & $25\%$ & $50\%$ & $75\%$ & $AP$ \\
        \midrule
        \textit{Certified robust} & 86.696 & 40.190 & 20.959 & 34.859 \\
        \textit{Location-aware} & 87.950 & 44.373 & 23.202 & 37.544\\
        \bottomrule
    \end{tabular}
    }
    \label{tab:resnet_bestdata}
\end{table*}

\begin{figure*}[!ht]
    \centering
    \subcaptionbox{\textit{Clean setting precision-recall curves}\label{fig:resnetbestprecrecall_clean}}{\includegraphics[width=0.4\textwidth]{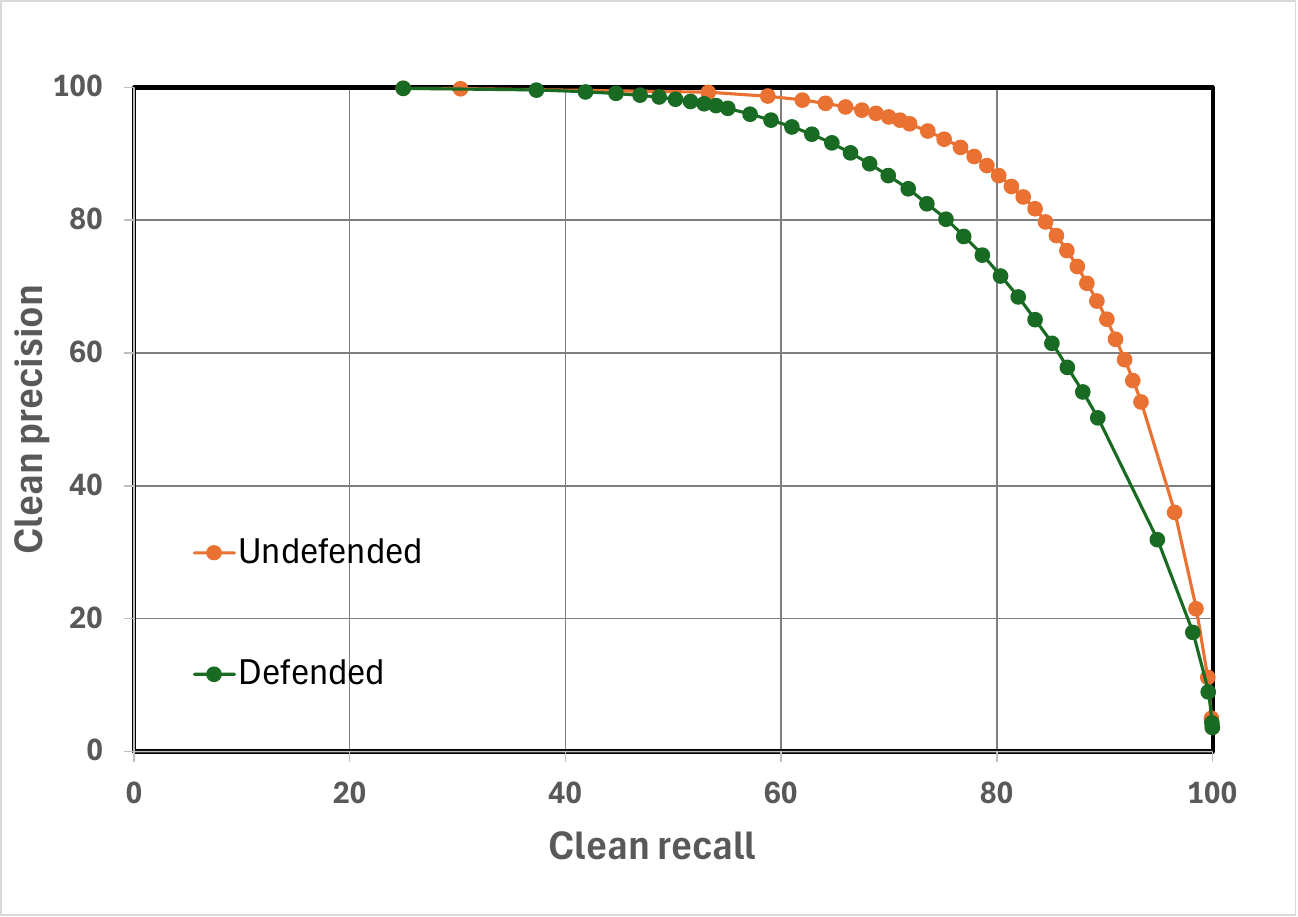}}
    \subcaptionbox{\textit{Certified robust setting precision-recall curves}\label{fig:resnetbestprecrecall_robust}}{\includegraphics[width=0.4\textwidth]{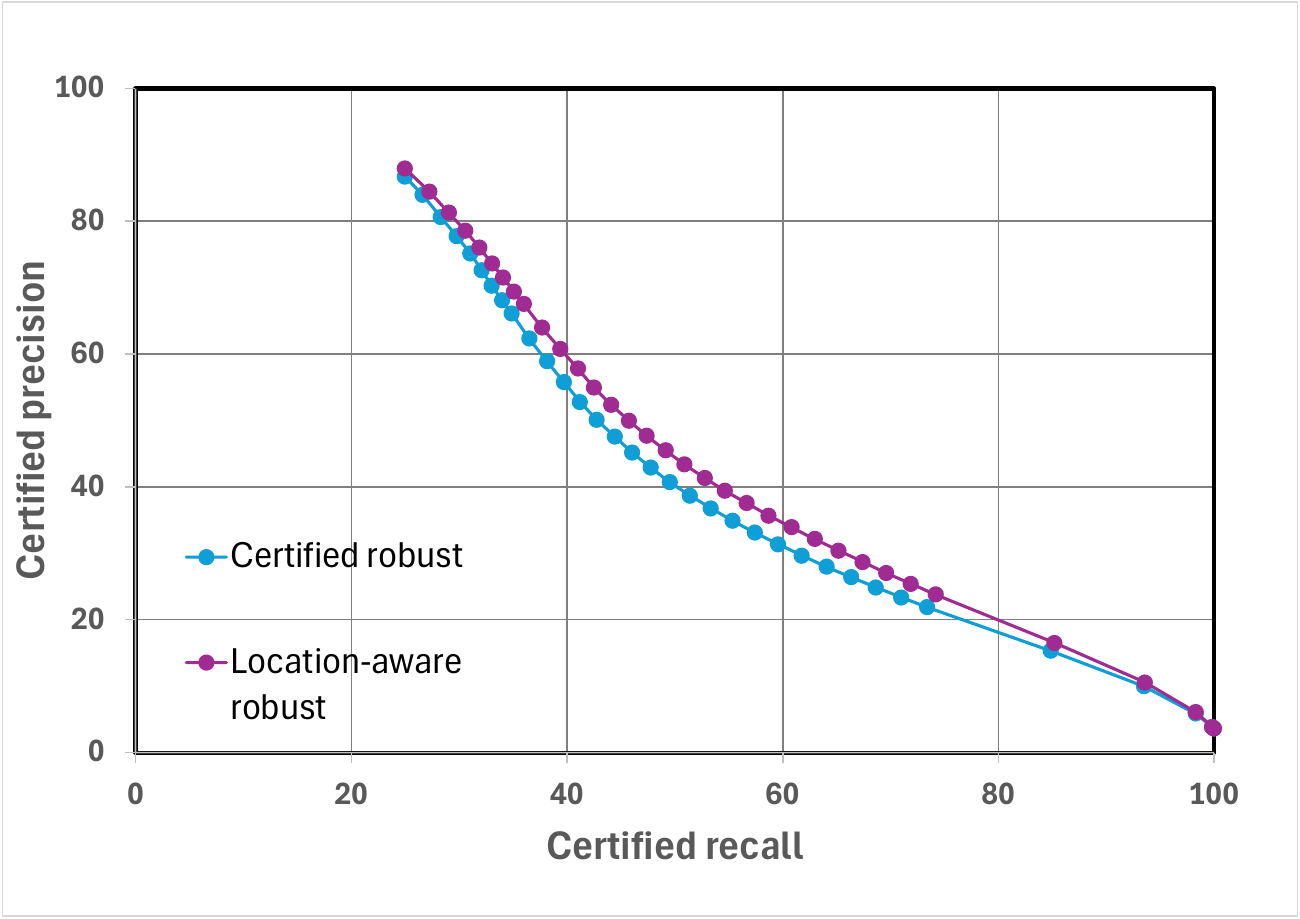}}
    \caption{\textit{PatchDEMUX precision-recall curves with Resnet architecture over the MS-COCO 2014 validation dataset. We consider the clean and certified robust evaluation settings. We assume the patch attack is at most $2\%$ of the image area and use a computational budget of $6 \times 6$ masks.}}
    \label{fig:resnetbestprecrecall}
\end{figure*}

In this section, we report results for PatchDEMUX while using the Resnet architecture \cite{ben-baruch_asymmetric_2021}; we leverage the same defense fine-tuning routine as \cref{subsection:bestresults} to achieve stronger performance. Experiments are done on the MS-COCO 2014 validation dataset. The precision values associated with key recall levels are in \cref{tab:resnet_bestdata}. \cref{fig:resnetbestprecrecall} features precision-recall plots, while AP values are present in \cref{tab:resnet_bestdata}. 

We find that the Resnet and ViT architectures show similar qualitative trends. For instance, defended clean performance is close to undefended performance across a variety of thresholds (see \cref{fig:resnetbestprecrecall_clean} and \cref{tab:resnetbesttable_clean}). The Resnet-based variant of PatchDEMUX also achieves non-trivial robustness, with a certified average precision of $37.544\%$. In general, the precision-recall curves for the two architectures are similar in shape across all four evaluation settings. 

Despite these similarities, the ViT model consistently outperforms the Resnet model. More specifically, the ViT-based variant of PatchDEMUX provides a ${\sim}4$ point boost to clean AP in the defended clean setting and a ${\sim}7$ point boost to certified AP in the two certified robust settings (see \cref{tab:bestdata}). This improvement might be attributable to the training procedure of vision transformers, which involves a masking process that is similar in concept to PatchCleanser's double-masking algorithm \cite{dosovitskiy_image_2021, xiang_patchcleanser_2022}.

\section{Location-aware Certification Analysis}
\label{section:appendix_locablation}

\begin{table}[!ht]
    \centering
    \caption{\textit{ViT-based PatchDEMUX performance with different location-aware attackers. Experiments performed on the MS-COCO 2014 validation dataset. Precision values are evaluated at key recall levels along with the approximated average precision. We assume the patch attack is at most $2\%$ of the image area and use a computational budget of $6 \times 6$ masks.}}
    \resizebox{\linewidth}{!}{\begin{tabular}{lcccc}
        \toprule
        \textbf{Architecture} & \multicolumn{4}{c}{ViT} \\
        \cmidrule(l){1-1}\cmidrule(l){2-5}
        \textbf{Certified recall} & $25\%$ & $50\%$ & $75\%$ & $AP$ \\
        \midrule
        \textit{FP attacker} & 95.724 & 62.132 & 33.112 & 49.474 \\
        \textit{FN attacker} & 95.971 & 58.158 & 27.199 & 45.951 \\
        \textit{Location-aware robust} & 95.670 & 56.038 & 26.375 & 44.902 \\
        \textit{Certified robust} & 95.369 & 50.950 & 22.662 & 41.763\\
        \bottomrule
    \end{tabular}}
    \label{tab:histdata}
\end{table}

\begin{figure}[!ht]
    \centering
    \includegraphics[width=0.4\textwidth]{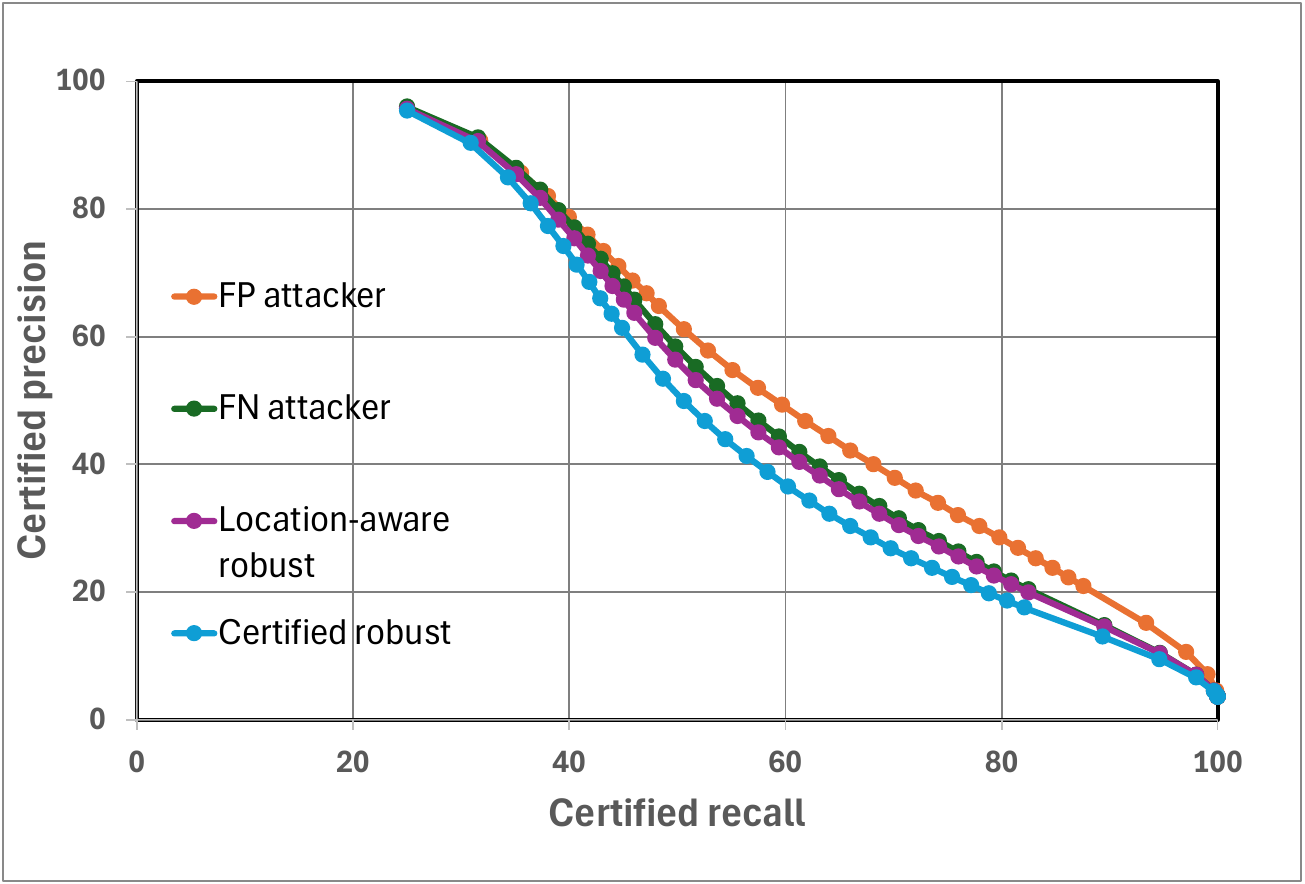}
    \caption{\textit{ViT-based PatchDEMUX precision-recall curves with different location-aware attackers. Experiments performed on the MS-COCO 2014 validation dataset. The baseline certified robust evaluation setting is included for comparison. We assume the patch attack is at most $2\%$ of the image area and use a computational budget of $6 \times 6$ masks.}}
    \label{fig:histprecrecall}
\end{figure}

In this section we investigate the location-aware certification approach from \cref{subsection:newcertmethod} in more detail.

\subsection{Attack vectors}
\label{subsection:histdiffattacks}
Based on \cref{subsubsection:newcertproposal}, there are a couple different ways to evaluate the robustness provided by the location-aware method. 
\begin{itemize}
    \item \emph{$FN$ attacker:} Here, we only track vulnerability status arrays $\bm{\lambda}$ for false negatives. Intuitively, this corresponds to the optimal attacker from \cref{subsubsection:newcertproposal} constructing a patch with the sole intent of increasing false negatives (i.e., a $FN$ attack). In \cref{alg:newcert}, tie-breakers are decided by picking the location which induces more false positives.
    \item \emph{$FP$ attacker:} In this scenario we only track vulnerability status arrays $\bm{\lambda}$ for false positives. This corresponds to the optimal attacker from \cref{subsubsection:newcertproposal} constructing a patch with the sole intent of increasing false positives (i.e., a $FP$ attack). In the $FP$ version of \cref{alg:newcert}, tie-breakers are decided by picking the location which induces more false negatives.
\end{itemize}
We also consider ``worst case'' performance where we simultaneously determine the worst patch location for both false negatives and false positives. Note that these two locations do not have to be identical, and as a result this ``worst case'' performance is not necessarily realizable. However, we evaluate this approach because it represents the theoretical lower bound on robustness for \cref{alg:newcert} given an arbitrarily motivated attacker.

\subsection{Experiment results}
\label{subsection:histresults}
We now empirically compare the different attack vectors possible under location-aware certification. We consider the ViT architecture alone as it provides better performance compared to Resnet. In addition, we leverage the same pre-trained model checkpoints used in \cref{subsection:bestresults} for consistency. Experiments are done on the MS-COCO 2014 validation dataset. Precision values corresponding to different attackers are present in \cref{tab:histdata}, while precision-recall plots are in \cref{fig:histprecrecall}.

\textbf{Provable robustness improvements.} Regardless of the attack strategy employed, location-aware certification provides improved robustness compared to the baseline certified robust setting; this is expected due to \cref{thm:histcorrect}. Improvement is most notable in both the \emph{mid recall-mid precision} and \emph{high recall-low precision} sections of the precision-recall curve. Overall, the most favorable evaluation approach provides an ${\sim}8$ point increase in certified AP compared to the baseline. Despite these improvements, location-aware certification does not fundamentally change the shape of the robust precision-recall curve under any of the three attack settings.

\textbf{Asymmetric attack performance.} Interestingly, location-aware certification provides the strongest robustness guarantees under the $FP$ attack strategy. This is likely due to the asymmetric dependence of precision and recall metrics on false positives. Specifically, both metrics depend on false negatives\footnote{Precision indirectly depends on false negatives via the true positive count}, but the recall metric does not depend on false positives. This makes $FP$ attacks ``weaker'' relative to other methods.

\section{Defense Fine-tuning for PatchDEMUX}
\label{section:appendix_pretrainablation}

\begin{table*}[!ht]
    \centering
    \caption{\textit{PatchDEMUX performance with ViT architecture on the MS-COCO 2014 validation dataset when using different defense fine-tuning techniques. Precision values are evaluated at key recall levels along with the approximated average precision. We assume the patch attack is at most $2\%$ of the image area and use a computational budget of $6 \times 6$ masks.}}
    \subcaptionbox{\textit{Clean setting precision values}\label{tab:clean_pretrainingdata}}{
        \tabcolsep=0.15cm
        \resizebox{\linewidth}{!}{
        \begin{tabular}{lcccccccccccccccc}
            \toprule
            \textbf{Architecture} & \multicolumn{4}{c}{ViT (vanilla/no fine-tuning)} & \multicolumn{4}{c}{ViT (Random Cutout)} & \multicolumn{4}{c}{ViT (Greedy Cutout $6 \times 6$)} & \multicolumn{4}{c}{ViT (Greedy Cutout $3 \times 3$)}\\
            \cmidrule(l){1-1}\cmidrule(l){2-5}\cmidrule(l){6-9}\cmidrule(l){10-13}\cmidrule(l){14-17}
            \textbf{Recall} & $25\%$ & $50\%$ & $75\%$ & $AP$ & $25\%$ & $50\%$ & $75\%$ & $AP$ & $25\%$ & $50\%$ & $75\%$ & $AP$ & $25\%$ & $50\%$ & $75\%$ & $AP$ \\
            \midrule
            \textit{Undefended} & 99.940 & 99.749 & 96.265 & 91.449 & 99.770 & 99.642 & 95.951 & 90.900 & 99.930 & 99.704 & 96.141 & 91.146 & 99.930 & 99.736 & 95.973 & 90.903 \\
            \textit{Defended} & 99.930 & 99.138 & 85.757 & 83.776 & 99.858 & 99.224 & 87.273 & 85.028 & 99.894 & 99.223 & 87.764 & 85.276 & 99.900 & 99.230 & 87.741 & 85.271 \\
            \bottomrule
        \end{tabular}
        }
    }
    \par\bigskip
    \subcaptionbox{\textit{Certified robust setting precision values}\label{tab:robust_pretrainingdata}}{
        \tabcolsep=0.15cm
        \resizebox{\linewidth}{!}{
        \begin{tabular}{lcccccccccccccccc}
            \toprule
            \textbf{Architecture} & \multicolumn{4}{c}{ViT (vanilla/no fine-tuning)} & \multicolumn{4}{c}{ViT (Random Cutout)} & \multicolumn{4}{c}{ViT (Greedy Cutout $6 \times 6$)} & \multicolumn{4}{c}{ViT (Greedy Cutout $3 \times 3$)}\\
            \cmidrule(l){1-1}\cmidrule(l){2-5}\cmidrule(l){6-9}\cmidrule(l){10-13}\cmidrule(l){14-17}
            \textbf{Certified recall} & $25\%$ & $50\%$ & $75\%$ & $AP$ & $25\%$ & $50\%$ & $75\%$ & $AP$ & $25\%$ & $50\%$ & $75\%$ & $AP$ & $25\%$ & $50\%$ & $75\%$ & $AP$ \\
            \midrule
            \textit{Certified robust} & 90.767 & 38.490 & 20.846 & 35.003 & 94.192 & 47.548 & 24.603 & 40.975 & 95.369 & 51.580 & 22.662 & 41.763 & 95.574 & 51.095 & 23.454 & 42.077 \\
            \textit{Location-aware robust} & 91.665 & 43.736 & 23.163 & 38.001 & 94.642 & 52.491 & 27.526 & 43.908 & 95.670 & 56.038 & 26.375 & 44.902 & 95.959 & 55.958 & 27.105 & 45.122 \\
            \bottomrule
        \end{tabular}
        }
    }
    \label{tab:pretrainingdata}
\end{table*}

\begin{figure*}[!ht]
    \centering
    \subcaptionbox{\textit{Undefended clean precision-recall curves}}{\includegraphics[width=0.4\textwidth]{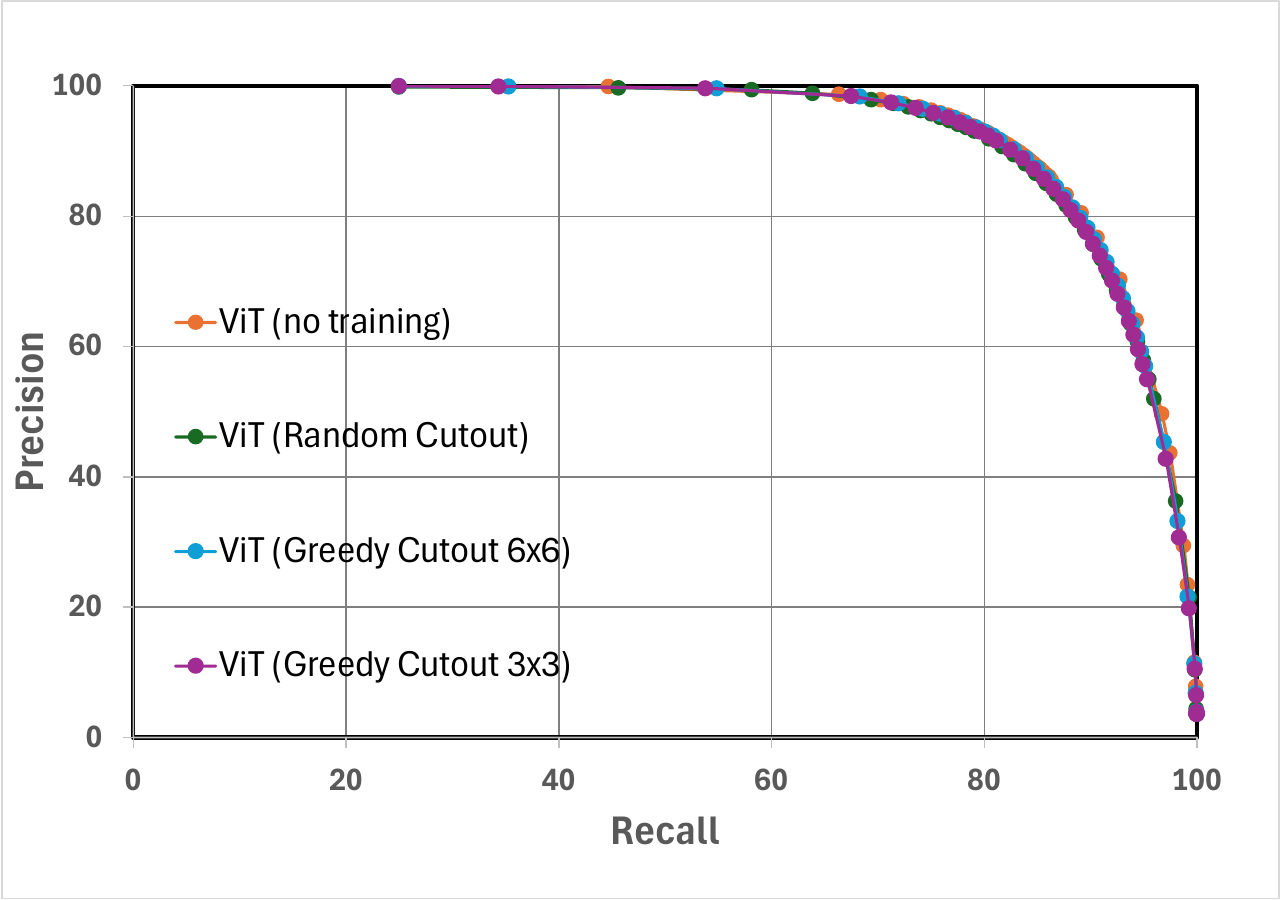}}
    \subcaptionbox{\textit{Defended clean precision-recall curves}}{\includegraphics[width=0.4\textwidth]{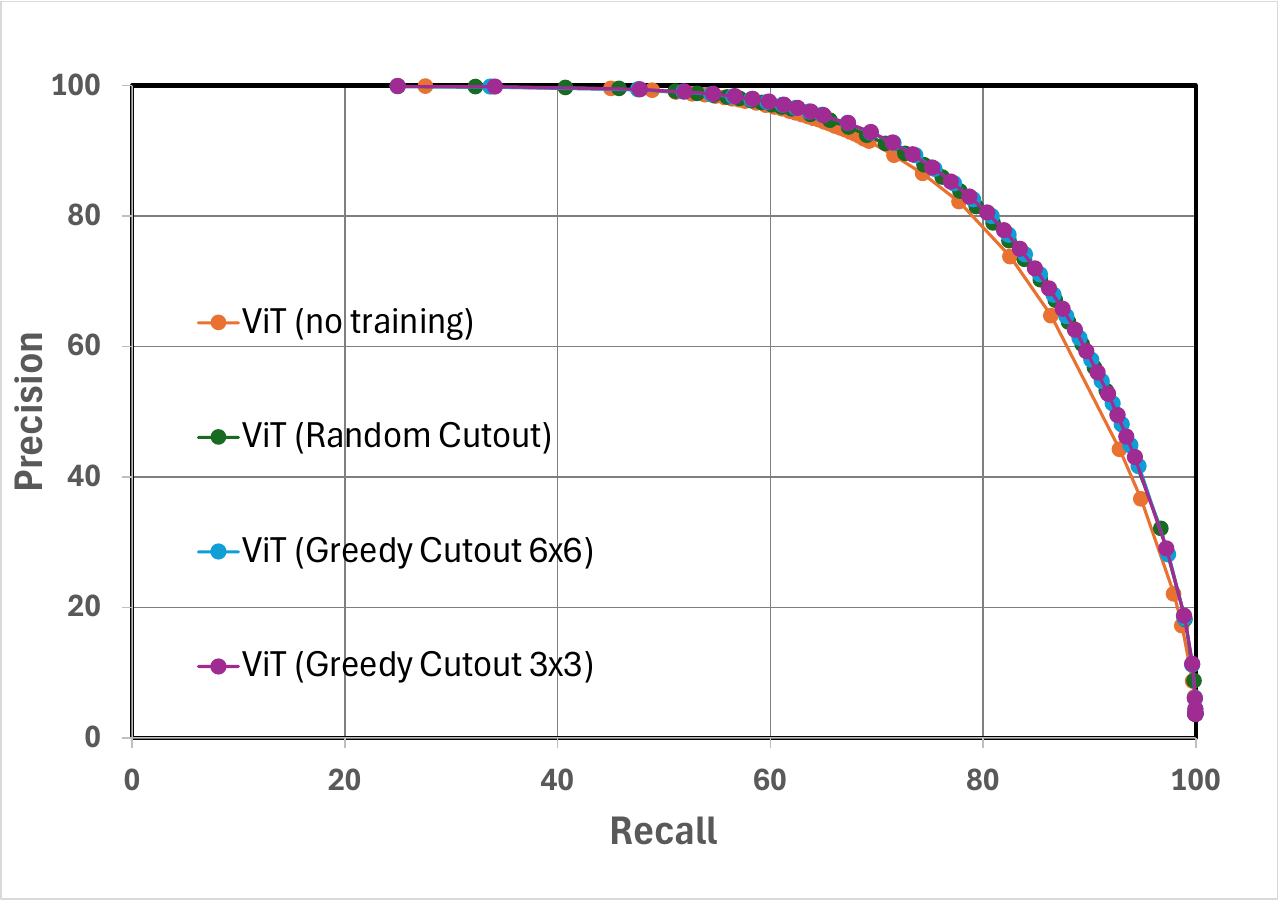}}
    \subcaptionbox{\textit{Certified robust precision-recall curves}}{\includegraphics[width=0.4\textwidth]{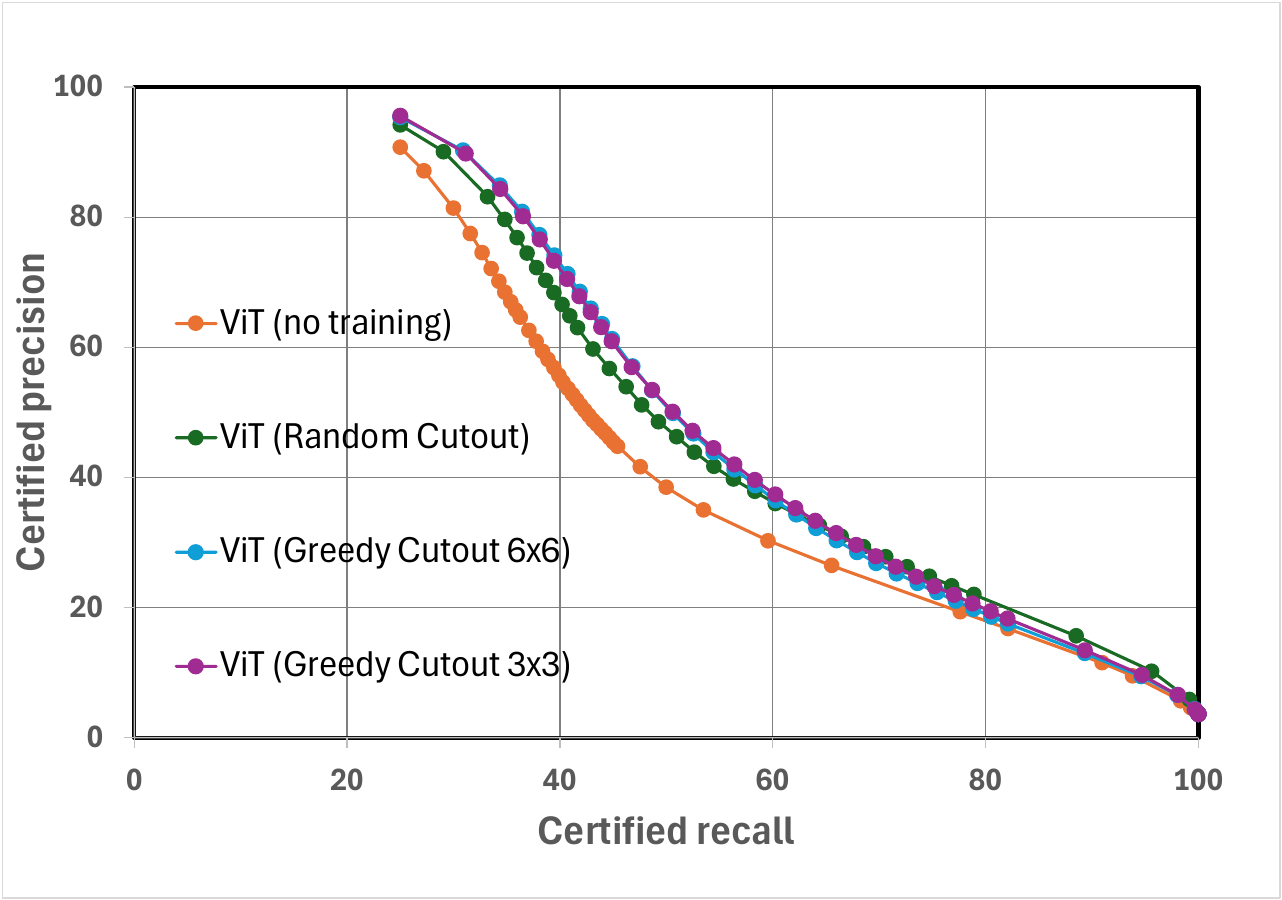}}
    \subcaptionbox{\textit{Location-aware robust precision-recall curves}}{\includegraphics[width=0.4\textwidth]{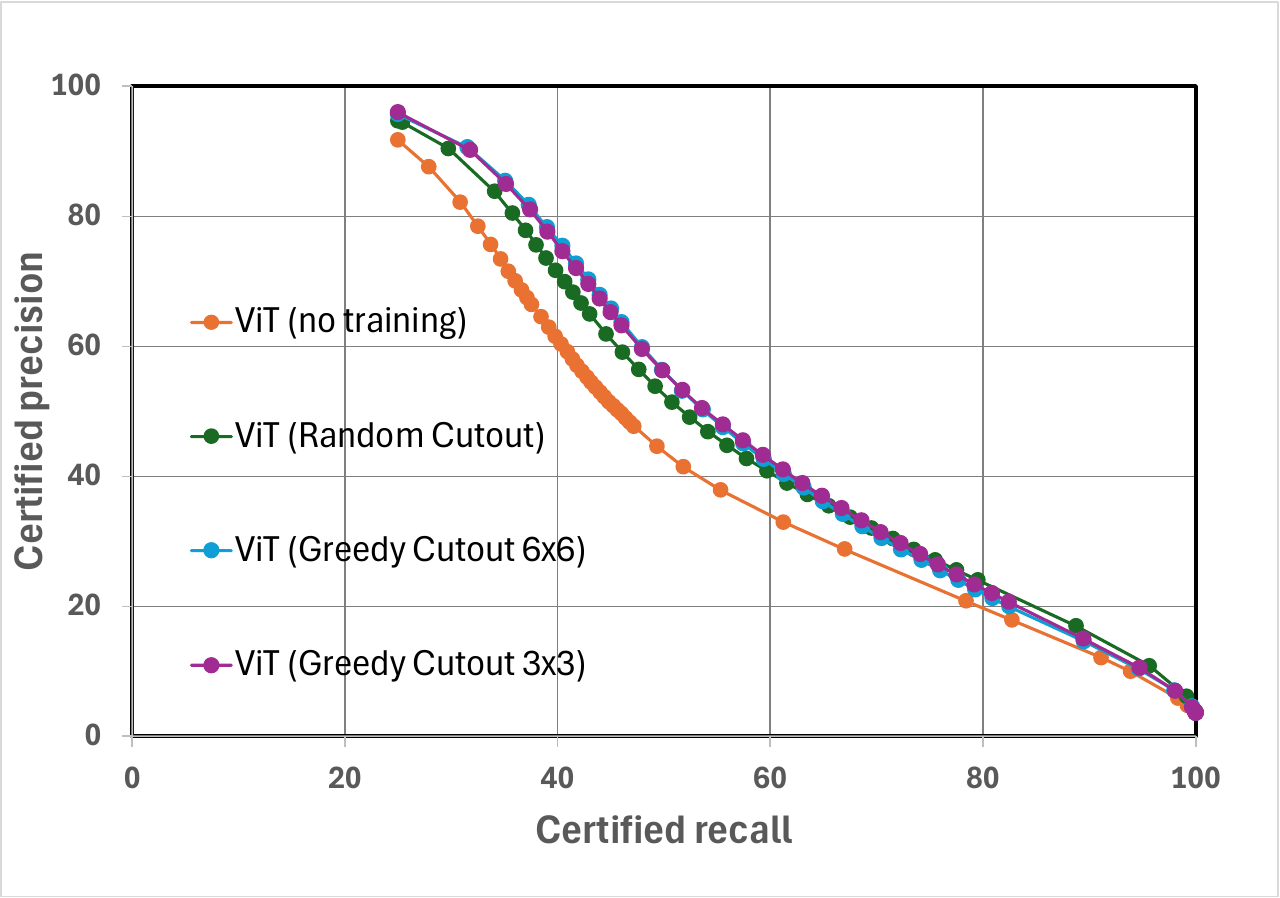}}
    \caption{\textit{PatchDEMUX precision-recall curves with ViT architecture over the MS-COCO 2014 validation dataset when using different defense fine-tuning techniques. We consider each of the four evaluation settings in separate plots. We assume the patch attack is at most $2\%$ of the image area and use a computational budget of $6 \times 6$ masks.}}
    \label{fig:pretrainingprecrecall}
\end{figure*}

Single-label CDPAs often leverage defense fine-tuning routines to improve the robustness of underlying single-label classifiers; these work by training the model on specially augmented data  \cite{devries_improved_2017, xiang_patchcleanser_2022, saha_revisiting_2023}. In this section, we investigate whether some of these routines can extend to multi-label classifiers and improve PatchDEMUX performance. We specifically consider fine-tuning strategies used by PatchCleanser, as PatchCleanser is the certifiable backbone for PatchDEMUX in this work. 

\subsection{Defense fine-tuning techniques for PatchCleanser}
\label{subsection:pretrainingpatchcleanser}
Two different defense fine-tuning techniques have been used to improve the performance of PatchCleanser: \emph{Random Cutout} \cite{devries_improved_2017} and \emph{Greedy Cutout} \cite{saha_revisiting_2023}. The former works by placing two square masks at random locations on training images, with each mask covering at most $25\%$ of the image area \cite{devries_improved_2017, xiang_patchcleanser_2022}. \citet{xiang_patchcleanser_2022} found that Random Cutout fine-tuning provides significant boosts to the robustness of PatchCleanser; intuitively, using cutout masks for defense fine-tuning helps the underlying model become more tolerant to occlusion effects from double-masking procedures. Later, \citet{saha_revisiting_2023} proposed the Greedy Cutout fine-tuning procedure and demonstrated superior performance to Random Cutout for PatchCleanser. This approach works by augmenting each training image with the pair of certification masks that greedily induce the highest loss.

\subsection{Defense fine-tuning methodology}
\label{subsection:modelpretraining}
In our experiments we compare the following three defense fine-tuning methods, which are representative of settings used in prior work \cite{devries_improved_2017, xiang_patchcleanser_2022, saha_revisiting_2023}.

\begin{itemize}
    \item Random Cutout fine-tuning with two square $25\%$ masks
    \item Greedy Cutout fine-tuning with $6 \times 6$ certification masks
    \item Greedy Cutout fine-tuning with $3 \times 3$ certification masks
\end{itemize}

\noindent
For Greedy Cutout, we compute the loss for masks while models are in evaluation mode; this approach helps avoid consistency issues associated with batch normalization. We do not consider the more complex multi-size greedy cutout approach from \citet{saha_revisiting_2023} due to difficulties with mask decompositions.

To train the model with these methods, we first obtain existing checkpoints for the MS-COCO 2014 classification task \cite{ben-baruch_asymmetric_2021, liu_query2label_2021}. We then follow the training methodology for multi-label classifiers outlined by \citet{ben-baruch_asymmetric_2021}. Specifically, we use asymmetric loss (ASL) as the loss function, a $1$cycle learning rate policy with max learning rate $\alpha_{max} = 5.0 \cdot 10^{-5}$, automatic mixed precision (AMP) for faster training, and exponential moving average (EMA) of model checkpoints for improved inference \cite{ben-baruch_asymmetric_2021}. Models are fine-tuned on copies of the MS-COCO 2014 training dataset augmented by Random Cutout and Greedy Cutout. We use the Adam optimizer for $5$ epochs, and best checkpoints are picked according to the loss on held out data\footnote{We find that fine-tuning for longer leads to overfitting.}. A cluster of NVIDIA A100 $40$GB GPUs are used to perform the fine-tuning.

\subsection{Experiment results}
\label{subsection:modelpretrainingresults}
Results for the different defense fine-tuning routines are in \cref{tab:pretrainingdata}. In addition, precision-recall plots comparing the defense fine-tuning routines for each of the four PatchDEMUX evaluation settings are present in \cref{fig:pretrainingprecrecall}. We consider the ViT architecture alone as it provides better performance compared to Resnet. Experiments are done on the MS-COCO 2014 validation dataset.

\textbf{Defense fine-tuning boosts performance.} In general, we find that using a defense fine-tuning routine of any kind leads to performance boosts for PatchDEMUX. For instance, fine-tuning helps the two certified robust evaluation settings achieve a $6-7$ point improvement in certified AP compared to the vanilla checkpoints, while the defended clean setting demonstrates a ${\sim}2$ point improvement in clean AP compared to the baseline. Greedy Cutout also provides additional robustness boosts compared to Random Cutout, with certified AP metrics being almost a full point higher; this corroborates with findings from \citet{saha_revisiting_2023}. Note that in general defense fine-tuning strategies are less effective in the clean settings. This is likely because the clean settings already demonstrate (relatively) strong performance, and thus potential gains from fine-tuning are more marginal. Nevertheless, we prioritize the defended clean setting overall as it is most representative of typical performance. The Greedy Cutout $6 \times 6$ fine-tuning strategy, which achieves the highest defended clean AP value, is therefore featured in \cref{subsection:bestresults}.

\textbf{Location-aware certification provides consistent improvements.} An interesting observation from \cref{tab:pretrainingdata} is that the location-aware robust setting provides a consistent $3$ point boost to certified AP regardless of the presence/absence of defense fine-tuning. This suggests that our location-aware certification technique has general utility across a variety of scenarios and that it ``stacks'' with other sources of robustness improvements.

\section{Runtime Analysis of PatchDEMUX}
\label{section:appendix_runtime}

\begin{table}[!ht]
    \centering
    \caption{\textit{Runtime experiments on PatchDEMUX. We report median per-sample inference time (in milliseconds) across a random sample of 2000 datapoints from the MS-COCO 2014 validation dataset. We assume the patch attack is at most $2\%$ of the image area.}}
    \resizebox{\linewidth}{!}{\begin{tabular}{lccc}
        \toprule
        \textbf{Architecture} & ViT ($2 \times 2$ masks) & ViT ($4 \times 4$ masks) & ViT ($6 \times 6$ masks) \\
        \midrule
        \textit{Undefended} & 31.130 & 31.130 & 31.130 \\
        \textit{Defended (single-label)} & 200.61 & 674.98 & 1451.1  \\
        \textit{Defended (multi-label)} & 317.30 & 1892.3 & 5668.7 \\
        \bottomrule
    \end{tabular}}
    \label{tab:runtime_vals}
\end{table}

\begin{figure}[!ht]
    \centering
    \includegraphics[width=0.4\textwidth]{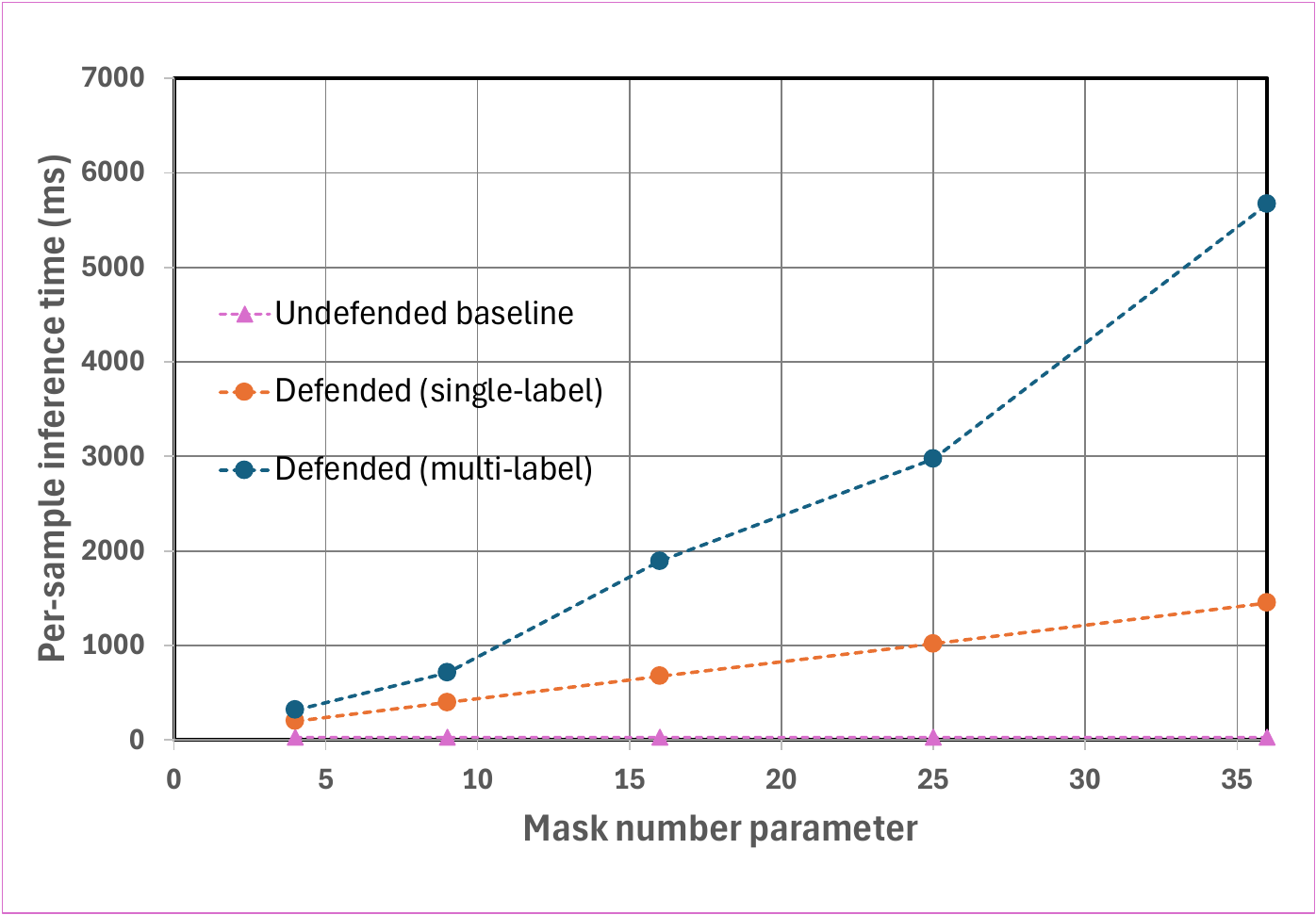}
    \caption{\textit{Plot of PatchDEMUX runtime as a function of mask number. We report median per-sample inference time (in milliseconds) across a random sample of 2000 datapoints from the MS-COCO 2014 validation dataset. We assume the patch attack is at most $2\%$ of the image area.}}
    \label{fig:runtime_fig}
\end{figure}

In this section, we analyze the runtime of the PatchDEMUX inference procedure. To determine the impact of class number, we create a restricted version of our inference procedure that operates only on the first class (i.e., it ignores the remainder of the label $\mathbf{y} \in \{0, 1\}^c$); this is essentially an instance of PatchCleanser isolated to a single class. We then track the runtime for $2000$ random datapoints from the MS-COCO 2014 validation dataset. We use the ViT checkpoints from \cref{subsection:bestresults} and use a batch size of $1$ to directly obtain per-sample inference time. The median per-sample inference times for different mask numbers are present in \cref{tab:runtime_vals} and \cref{fig:runtime_fig}.

We note that for most mask numbers the full multi-label inference procedure takes roughly $3\times$ longer than the single-label implementation. Given that MS-COCO has $c = 80$ classes, this is significantly faster than the expected runtime for the naive method from \cref{alg:inference_patchdemux}. The reason for this improvement is an implementation-level optimization that takes advantage of relatively negligible defense post-processing. Specifically, the primary bottleneck for single-label inference procedures is often model query time; the associated defense post-processing is negligible in comparison. This means that for each feedforward through the multi-label classifier we can apply single-label defense post-processing to every class and re-use individual class outputs as needed for multi-label inference. As an example, with PatchCleanser this is done by saving intermediate outputs that correspond to double-masked images. Overall, this technique helps prevent computation cost from increasing drastically with the number of classes. 

Despite this optimization, we note that the multi-label inference implementation is still not as fast as the single-label inference implementation. This is because many single-label inference procedures have worst-case scenarios which take significantly longer than typical cases. Increasing the number of classes increases the possibility that at least one class will trigger a worst-case scenario, leading to longer overall runtime.

\section{Performance on PASCAL VOC}
\label{section:appendix_pascalvoc}
\begin{table*}[!ht]
    \centering
    \caption{\textit{PatchDEMUX performance with ViT architecture on the PASCAL VOC 2007 validation dataset. Precision values are evaluated at key recall levels along with the approximated average precision. We assume the patch attack is at most $2\%$ of the image area and use a computational budget of $6 \times 6$ masks.}}
    \subcaptionbox{\textit{Clean setting precision values}\label{tab:pascalvoc_data_clean}}{
    \begin{tabular}{lcccc}
        \toprule
        \textbf{Architecture} & \multicolumn{4}{c}{ViT} \\
        \cmidrule(l){1-1}\cmidrule(l){2-5}
        \textbf{Clean recall} & $25\%$ & $50\%$ & $75\%$ & $AP$ \\
        \midrule
        \textit{Undefended} & 99.790 & 99.710 & 98.506 & 96.140 \\
        \textit{Defended} & 99.894 & 99.870 & 98.167 & 92.593 \\
        \bottomrule
    \end{tabular}
    }
    \subcaptionbox{\textit{Certified robust setting precision values}\label{tab:pascalvoc_data_robust}}{
    \begin{tabular}{lcccc}
        \toprule
        \textbf{Architecture} & \multicolumn{4}{c}{ViT} \\
        \cmidrule(l){1-1}\cmidrule(l){2-5}
        \textbf{Certified recall} & $25\%$ & $50\%$ & $75\%$ & $AP$ \\
        \midrule
        \textit{Certified robust} & 90.520 & 74.675 & 38.100 & 54.904 \\
        \textit{Location-aware} & 90.591 & 75.672 & 40.320 & 56.030\\
        \bottomrule
    \end{tabular}
    }
    \label{tab:pascalvoc_data}
\end{table*}

\begin{figure*}[!ht]
    \centering
    \subcaptionbox{\textit{Clean setting precision-recall curves}\label{fig:pascalvoc_vitbestprecrecall_clean}}{\includegraphics[width=0.4\textwidth]{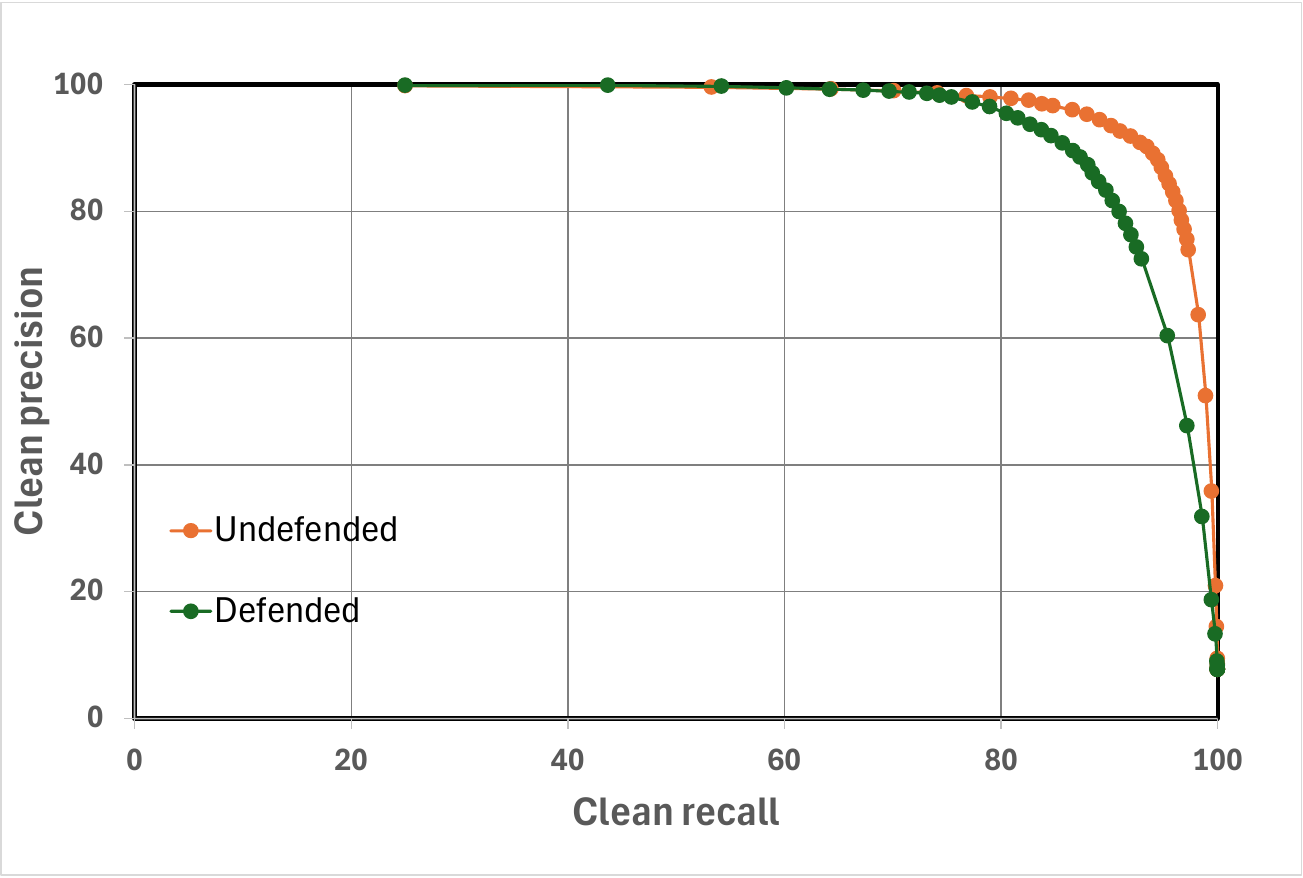}}
    \subcaptionbox{\textit{Certified robust setting precision-recall curves}\label{fig:pascalvoc_vitbestprecrecall_robust}}{\includegraphics[width=0.4\textwidth]{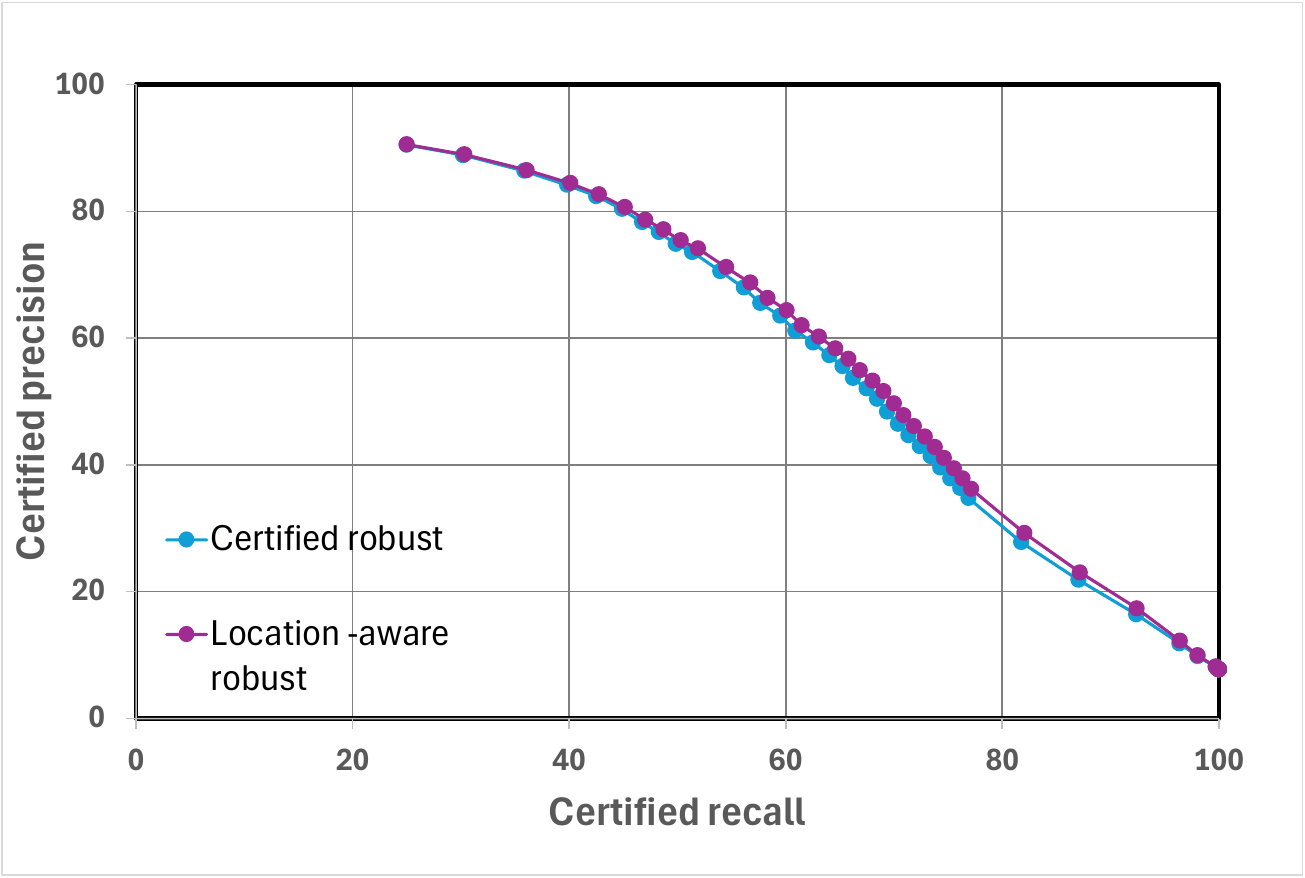}}
    \caption{\textit{PatchDEMUX precision-recall curves with ViT architecture over the PASCAL VOC 2007 test dataset. We consider the clean and certified robust evaluation settings. We assume the patch attack is at most $2\%$ of the image area and use a computational budget of $6 \times 6$ masks.}}
    \label{fig:pascalvoc_vitbestprecrecall}
\end{figure*}

In this section we report evaluation results for PatchDEMUX on PASCAL VOC. Because model checkpoints for PASCAL VOC are not readily available, we first create a multi-label classifier for the PASCAL VOC task. To do so, we use model checkpoints pre-trained on the MS-COCO dataset and fine-tune it for the PASCAL VOC dataset. We use asymmetric loss (ASL) as the loss function, a $1$cycle learning rate policy with max learning rate $\alpha_{max} = 2.0 \cdot 10^{-3}$, automatic mixed precision (AMP) for faster training, and exponential moving average (EMA) of model checkpoints for improved inference \cite{ben-baruch_asymmetric_2021}. Models are fine-tuned on the PASCAL VOC 2007 training split. We use the Adam optimizer for $15$ epochs and select the best checkpoint according to average loss on the PASCAL VOC 2007 validation split. A cluster of NVIDIA A100 $40$GB GPUs are used to perform the fine-tuning. We omit additional security fine-tuning to focus on baseline performance.

We evaluate the fine-tuned model on the PASCAL VOC 2007 test dataset. We summarize the precision values associated with key recall levels in \cref{tab:pascalvoc_data}. \cref{fig:pascalvoc_vitbestprecrecall} features precision-recall plots, while AP values are present in \cref{tab:pascalvoc_data}. We consider the ViT architecture alone as it provides better performance compared to Resnet.

\textbf{Strong all-around performance.}
As shown in \cref{tab:pascalvoc_data} and \cref{fig:pascalvoc_vitbestprecrecall}, PatchDEMUX achieves strong performance in all evaluation settings. In fact, PatchDEMUX's performance on PASCAL VOC is significantly higher than its performance on MS-COCO, with a $\sim7$ point increase in defended clean performance and $\sim12$ point increase in certified robustness metrics (see \cref{subsection:bestresults}). Overall, these stronger results are expected given that the PASCAL VOC benchmark has fewer classes than MS-COCO, making it an easier benchmark for classifiers to predict. 

\textbf{Concave robustness curves.} 
An interesting observation is that both of the PASCAL VOC robustness curves are concave in \cref{fig:pascalvoc_vitbestprecrecall_robust}. This is in contrast to MS-COCO experiments, where even after applying security fine-tuning methods the robustness curves remained convex (see \cref{fig:pretrainingprecrecall}). An important takeaway from this is that PatchDEMUX performance is dataset dependent, and robustness bounds will ultimately depend on the nature of image datapoints and/or labels. Additionally, we note that location-aware certification only provides a $\sim1$ point boost to certified AP; this suggests that location-aware certification is most beneficial when baseline robustness bounds are weak. 

\section{Tables for Security Parameter Experiments}
\label{section:appendix_sectables}

\begin{table*}[hbt!]
    \centering
    \caption{\textit{PatchDEMUX performance with ViT architecture on the MS-COCO 2014 validation dataset. We vary the mask number security parameter associated with the underlying single-label CDPA PatchCleanser and fix the estimated patch size at $2\%$ of the image area. We list even mask number values for brevity. Precision values are evaluated at key recall levels along with the approximated average precision.}}
    \subcaptionbox{\textit{Clean setting precision values}\label{tab:clean_masknum}}{
        \tabcolsep=0.15cm
        \resizebox{\linewidth}{!}{
        \begin{tabular}{lcccccccccccc}
            \toprule
            \textbf{Architecture} & \multicolumn{4}{c}{ViT ($2 \times 2$ masks)} & \multicolumn{4}{c}{ViT ($4 \times 4$ masks)} & \multicolumn{4}{c}{ViT ($6 \times 6$ masks)} \\
            \cmidrule(l){1-1}\cmidrule(l){2-5}\cmidrule(l){6-9}\cmidrule(l){10-13}
            \textbf{Recall} & $25\%$ & $50\%$ & $75\%$ & $AP$ & $25\%$ & $50\%$ & $75\%$ & $AP$ & $25\%$ & $50\%$ & $75\%$ & $AP$ \\
            \midrule
            \textit{Undefended} & 99.940 & 99.749 & 96.265 & 91.449 & 99.940 & 99.749 & 96.265 & 91.449 & 99.940 & 99.749 & 96.265 & 91.449\\
            \textit{Defended} & 99.910 & 96.999 & 75.393 & 78.727 & 99.930 & 98.845 & 83.388 & 82.529 & 99.930 & 99.138 & 85.757 & 83.776\\
            \bottomrule
        \end{tabular}
        }
    }
    \par\bigskip
    \subcaptionbox{\textit{Certified robust setting precision values}\label{tab:cert_masknum}}{
        \tabcolsep=0.15cm
        \resizebox{\linewidth}{!}{
        \begin{tabular}{lcccccccccccc}
            \toprule
            \textbf{Architecture} & \multicolumn{4}{c}{ViT ($2 \times 2$ masks)} & \multicolumn{4}{c}{ViT ($4 \times 4$ masks)} & \multicolumn{4}{c}{ViT ($6 \times 6$ masks)} \\
            \cmidrule(l){1-1}\cmidrule(l){2-5}\cmidrule(l){6-9}\cmidrule(l){10-13}
            \textbf{Certified recall} & $25\%$ & $50\%$ & $75\%$ & $AP$ & $25\%$ & $50\%$ & $75\%$ & $AP$ & $25\%$ & $50\%$ & $75\%$ & $AP$ \\
            \midrule
            \textit{Certified robust} & 41.577 & 17.924 & 9.909 & 15.735 & 87.976 & 37.163 & 19.798 & 33.231 & 90.767 & 38.490 & 20.846 & 35.003\\
            \textit{Location-aware robust} & 46.553 & 20.624 & 10.798 & 17.690 & 89.259 & 41.490 & 21.763 & 35.953 & 91.665 & 43.736 & 23.163 & 38.001\\
            \bottomrule
        \end{tabular}
        }
    }
    \label{tab:masknumvaryingdata}
\end{table*}

\begin{table*}[hbt!]
    \centering
    \caption{\textit{PatchDEMUX performance with ViT architecture on the MS-COCO 2014 validation dataset. We vary the patch size security parameter associated with the underlying single-label CDPA PatchCleanser and fix the mask number parameter at $6 \times 6$. Precision values are evaluated at key recall levels along with the approximated average precision.}}
    \subcaptionbox{\textit{Clean setting precision values}\label{tab:clean_patch_exp}}{
        \tabcolsep=0.15cm
        \resizebox{\linewidth}{!}{
        \begin{tabular}{lcccccccccccccccc}
            \toprule
            \textbf{Architecture} & \multicolumn{4}{c}{ViT ($0.5\%$ patch)} & \multicolumn{4}{c}{ViT ($2\%$ patch)} & \multicolumn{4}{c}{ViT ($8\%$ patch)} & \multicolumn{4}{c}{ViT ($32\%$ patch)}\\
            \cmidrule(l){1-1}\cmidrule(l){2-5}\cmidrule(l){6-9}\cmidrule(l){10-13}\cmidrule(l){14-17}
            \textbf{Recall} & $25\%$ & $50\%$ & $75\%$ & $AP$ & $25\%$ & $50\%$ & $75\%$ & $AP$ & $25\%$ & $50\%$ & $75\%$ & $AP$ & $25\%$ & $50\%$ & $75\%$ & $AP$ \\
            \midrule
            \textit{Undefended} & 99.940 & 99.749 & 96.265 & 91.449 & 99.940 & 99.749 & 96.265 & 91.449 & 99.940 & 99.749 & 96.265 & 91.449 & 99.940 & 99.749 & 96.265 & 91.449 \\
            \textit{Defended} & 99.947 & 99.470 & 89.150 & 85.731 & 99.930 & 99.138 & 85.757 & 83.776 & 99.907 & 97.798 & 78.712 & 80.093 & 99.529 & 89.813 & 60.543 & 69.952 \\
            \bottomrule
        \end{tabular}
        }
    }
    \par\bigskip
    \subcaptionbox{\textit{Certified robust setting precision values}\label{tab:cert_patch_exp}}{
        \tabcolsep=0.15cm
        \resizebox{\linewidth}{!}{
        \begin{tabular}{lcccccccccccccccc}
            \toprule
            \textbf{Architecture} & \multicolumn{4}{c}{ViT ($0.5\%$ patch)} & \multicolumn{4}{c}{ViT ($2\%$ patch)} & \multicolumn{4}{c}{ViT ($8\%$ patch)} & \multicolumn{4}{c}{ViT ($32\%$ patch)}\\
            \cmidrule(l){1-1}\cmidrule(l){2-5}\cmidrule(l){6-9}\cmidrule(l){10-13}\cmidrule(l){14-17}
            \textbf{Certified recall} & $25\%$ & $50\%$ & $75\%$ & $AP$ & $25\%$ & $50\%$ & $75\%$ & $AP$ & $25\%$ & $50\%$ & $75\%$ & $AP$ & $25\%$ & $50\%$ & $75\%$ & $AP$ \\
            \midrule
            \textit{Certified robust} & 97.670 & 61.867 & 30.239 & 48.820 & 90.767 & 38.490 & 20.846 & 35.003 & 44.666 & 19.249 & 11.832 & 16.961 & 6.933 & 5.827 & 4.854 & 5.297 \\
            \textit{Location-aware robust} & 97.769 & 66.350 & 32.850 & 51.158 & 91.665 & 43.736 & 23.163 & 38.001 & 50.263 & 22.965 & 13.363 & 19.713 & 9.169 & 6.997 & 5.307 & 6.195 \\
            \bottomrule
        \end{tabular}
        }
    }
    \label{tab:patchvaryingdata}
\end{table*}

In this section we provide the tables associated with the security parameter experiments in \cref{section:securityparam}. In \cref{tab:masknumvaryingdata} we list metrics associated with the mask number experiments from \cref{subsection:masknumexp}. In \cref{tab:patchvaryingdata} we list metrics associated with the patch size experiments from \cref{subsection:patchsizeexp}.

\end{document}